\newtheorem{theorem}{Theorem}
\newtheorem{lemma}{Lemma}
\newtheorem{corollary}{Corollary}
\newtheorem{definition}{Definition}
\newtheorem{remark}{Remark}
\newtheorem{example}{Example}
\definecolor{CommentBlue}{RGB}{0,80,239}
\definecolor{CommentPurple}{RGB}{145, 32, 186}
\definecolor{CommentRed}{RGB}{168,24,24}
\definecolor{CommentGreen}{RGB}{10,160,10}
\definecolor{CommentOrange}{RGB}{204,102,0}
\newcommand{\comment}[1]{}
\def\cG{\mbox{$\cal{G}$}}
\def\cS{\mbox{$\cal{S}$}}
\def\cD{\mbox{$\cal{D}$}}
\def\cJ{\mbox{$\cal{J}$}}
\def\cP{\mbox{$\cal{P}$}}
\def\cC{\mbox{$\cal{C}$}}
\def\DRS{\mbox{$\cal{D}_{RS}$}}
\def\bd{\mbox{$\bar{d}$}}
\newcommand{\TRS}{X{}} 
\newcommand{\ND}[1]{{\bar{D}_{\tilde{#1}}}}
\newif\ifncc
\newif\ifarxiv
\title{Demand-Private Coded Caching and the Exact Trade-off for N=K=2}
\author{
	\IEEEauthorblockN{Sneha~Kamath\IEEEauthorrefmark{1}, Jithin~Ravi\IEEEauthorrefmark{2}, Bikash~Kumar~Dey\IEEEauthorrefmark{3}}
	\IEEEauthorblockA{\IEEEauthorrefmark{1} Qualcomm, India. Email: snehkama@qti.qualcomm.com}
	\IEEEauthorblockA{\IEEEauthorrefmark{2} Universidad Carlos III de Madrid, Legan\'es, Spain. Email: rjithin@tsc.uc3m.es}
	\IEEEauthorblockA{\IEEEauthorrefmark{3} Indian Institute of Technology Bombay, Mumbai.
		Email: bikash@ee.iitb.ac.in}
\thanks{J.~Ravi  received funding from the European Research Council (ERC) under the European Union's Horizon 2020 research and innovation programme (Grant No.~714161). The work of
	B.~K.~Dey was supported in part by the Bharti Centre for Communication, lIT Bombay.}	
 }	
\begin{document}
  	\ifarxiv
  \IEEEoverridecommandlockouts
 \maketitle
 \fi
 
 \ifncc
 \maketitle
 \fi

 \begin{abstract} 
The distributed coded caching problem has been studied extensively in
the recent past.
While the known coded caching schemes achieve an improved transmission
rate, they violate the privacy of the users since in these schemes the demand
of one user is revealed to others in the delivery phase. In this paper, we
consider the coded caching problem under the constraint that
the demands of the other users remain information theoretically secret
from each user.
We first show that the memory-rate pair $(M,\min \{N,K\}(1-M/N))$ is achievable
under information theoretic demand privacy, while using broadcast transmissions. Using this, we show that perfectly demand-private coded caching rate
is order optimal for all parameter regimes. We  then show that a demand-private scheme for $N$ files and $K$ users
can be obtained from a non-private scheme that satisfies only a restricted
subset of demands of $NK$ users for $N$ files. We then focus on the demand-private coded caching problem for $K=2$ users, $N=2$ files. We characterize the
exact memory-rate trade-off for this case. To show the achievability, we use 
our first result to construct a demand-private scheme from a non-private
scheme satisfying a restricted demand subset that is known from an earlier work
by Tian. Further, by giving a converse based on the extra requirement of
privacy, we show that the obtained achievable region is the exact
memory-rate trade-off.
 \end{abstract}


 \section{Introduction}
In the seminal work~\cite{Maddah14}, Maddah-Ali and Niesen demonstrated
that significant gain in the transmission rate can be achieved in a noiseless
broadcast network by clever design of  caching and delivery schemes. The
network studied in~\cite{Maddah14} consists of one server and $K$ users, each
user is equipped with a cache of uniform size. The server has $N$ files and
each user requests one of the $N$ files in the delivery phase. By utilizing the
broadcasting opportunity of this network, Maddah-Ali and Niesen provided a
caching and delivery scheme which is shown to be order optimal within a factor
of 12.

In this paper, we consider the coded caching problem under privacy requirement
on the demands of the users, i.e., no user should learn anything about the
demands of the other users.
Recently, demand privacy for the coded caching setup has been studied from an
information theoretic perspective~\cite{Wan19, Kamath19,
AravindST19}. In \cite{Wan19}, it was studied under
a setup where the delivery phase uses private multicasts to subsets of users, equivalently studying computational privacy guarantee (see Remark~\ref{rem:perfectprivacy}).
Coded caching under perfect information theoretic privacy was studied first 
in~\cite{Kamath19}.
In both \cite{Wan19, Kamath19}, construction techniques were proposed for
deriving a demand-private scheme for $N$ files and $K$ users from a
non-private coded caching scheme for $N$ files and $NK$ users. 
The achievable memory-rate pairs of the derived schemes are the same
as that of the original non-private schemes.
In~\cite{AravindST19}, authors study the subpacketization requirement under
information theoretic demand privacy constraint for $N=K=2$. They have shown some lower bounds on the
transmission rate for a given subpacketization when the caching scheme is
constrained to be linear.

The non-private coded caching problem has been studied
by many authors.
The works~\cite{Amiri17, Zhang18, Vilardebo18, Yu18} focused on improving the
achievable rates of Maddah-Ali and Niesen~\cite{Maddah14} by designing new schemes. Yu {\em et al.}~\cite{Yu18} proposed
a new caching scheme which was shown to be order optimal within a factor of 2.
When the cache content is not allowed to be coded, the optimal rates were
characterized in~\cite{Yu18, Wan16}.  Several works have obtained improved
lower bounds on the rates, see for example~\cite{Ghasemi17,Wang18}.

The coded caching schemes in the noiseless broadcast network is inherently
prone to security and privacy issues since the broadcasted message is revealed
to everyone. Information theoretic secrecy from an external adversary who can
observe the broadcasted message was first studied by Sengupta {\em et
al.}~\cite{Sengupta15}. They proposed a scheme which prevents the adversary
from getting any information about any file from the broadcasted message.
Another privacy aspect was considered by Ravindrakumar {\em et al.}
in~\cite{Ravindrakumar18} where each user should
not get any information about any file other than the one requested by her. They
proposed a scheme which achieves this constraint  by distributing keys in the
placement phase.

The contributions of this paper are as listed below.

\comment{
Recently, an idependent and parallel work~\cite{Wan19} on demand privacy for coded caching was posted on arXiv on 28 August 2019. The problem formulation of~\cite{Wan19} is very similar to ours.
 We studied the problem of demand privacy for coded caching for the case of single request from users, the setup studied in~\cite{Maddah14}.
The authors in~\cite{Wan19} studied the cases of single request as well as multiple requests from the users. So it is good to compare our results with theirs for the single request case.
For the single request case, the achievable schemes in these two works are very different. 
In both of these works, the  tightness of the achievable rates are shown  by comparing it with the existing converse results on coded caching problem under no privacy requirement. 
The general scheme in~\cite{Wan19} is shown to be order optimal when $M\geq N/2$. 
In contrast, we show that our scheme is order optimal within a factor of 8 when $N \leq K$.
A special case where all users demand distinct files was also studied in~\cite{Wan19}. This assumption implies that $N\geq K$. They have provided an improved scheme for this special case which is order optimal within a factor of $4$ when $M\geq N/K$. We show that when $N>  K$, our scheme is order optimal within a factor of $4$ when $M \geq N/K$ without any restriction on the demand vectors.
}

\begin{enumerate}
\item  We first show in Theorem~\ref{th:basic} that the memory-rate pair $(M,\min \{N,K\}(1-M/N))$
is achievable for coded caching under information theoretic demand
privacy. Our achievable scheme uses broadcast transmissions in the delivery
phase, and this complements a similar result in ~\cite{Wan19} for their
model using private unicast transmissions in the delivery stage. We conclude in
Theorem~\ref{Thm_order} that the optimal rates with and without
demand privacy are always within a multiplicative factor, and this completes the
order optimality~\cite{Kamath19} of information theoretically demand-private coded caching in all memory regimes.

	\item  We show in Theorem~\ref{Thm_genach} that a demand-private scheme for $N$ files and $K$
users with the same memory-rate pair $(M,R)$ can be obtained from a non-private
scheme that serves only a subset of demands for $N$ files and $NK$ users.
This is a refinement of results of ~\cite{Wan19,Kamath19}, and the scheme uses
the idea in~\cite{Kamath19}.
However, the observation that the
particular non-private scheme is required to serve only a subset of demands is
new, and this is used later for the case of $N=K=2$, discussed
in the next item.
\item In Theorem~\ref{Thm_N2K2_region}, we characterize the exact
memory-rate trade-off with demand privacy for $N=K=2$.
We note that the region given in
Theorem~\ref{Thm_N2K2_region}  is strictly larger than achievable
regions known from existing literature (See Fig.~\ref{Fig_region_N2K2}). To obtain this
achievable region, we use two non-private caching schemes from~\cite{Tian2018}
which are required to serve  a restricted subset of demands.
Proving converse for this problem is difficult in general, and the converse
proof of Theorem~\ref{Thm_N2K2_region} is a key contribution of this
paper.
	
	\item The achievability of the exact memory-rate trade-off in
Theorem~\ref{Thm_N2K2_region} is proved by showing that memory-rate pairs
$(1/3,4/3)$ and $(4/3,1/3)$ are achievable. The caching and transmission
schemes to achieve these points are linear with coded prefetching. Incidentally,
these schemes also use a subpacketization of 3, which is the same as that
of the schemes in~\cite{AravindST19} for the rate points $(2/3,1)$ 
and $(1,2/3)$.
The question of whether the minimum required subpacketization is indeed 3 to achieve any
memory-rate pair with demand privacy for $N=K=2$ remains open.
\end{enumerate}

We present the problem formulation and definitions in Sec.~\ref{sec_problem}.
The results with proofs are presented in Sec.~\ref{sec_results}.


%
 \section{Problem formulation and definitions}
 \label{sec_problem}
 
Consider a server with $N$ files $W_0, W_1,\ldots,W_{N-1}$ which are assumed to
be independent and each of length $F$ bits. File $W_i, i=0, \ldots,
N-1$ takes values  in the set   $[2^F] := \{0,1,\ldots, 2^F-1\}$ uniformly at
random.  The server is connected to $K$ users  via a noiseless broadcast link.
Each user is equipped with a cache of size $MF$ bits, where  $M \in [0,N]$.
There are two phases in a coded caching scheme. In the first phase, called the
placement phase, the server fills the cache of each user.  In the delivery
phase, each user requests one file from the server. The index of the file 
requested by user $k$ is denoted by $D_k$. We assume that all $D_k$ are
independent of each other, and each of them is uniformly distributed in the set
$[N]$. Let  the vector $\bar{D}=[D_{0},D_{1},...,D_{K-1}]$ denote the demands
of all users, and also let  $\ND{k}$ denote all demands but $D_k$, i.e.,
$\ND{k}= \bar {D} \setminus \{ D_{k}\}$.  All users convey their demands
secretly to the server. Then, the server broadcasts a message of size $RF$ bits
to serve the request of the users. The broadcasted message, denoted by $X$,
consists of $RF$ bits, where $R$ is defined as the rate of transmission. User
$k$ decodes the requested file $W_{D_k}$ using  the received message,
cache content, and $D_k$.

In a demand-private coded caching setup (see Fig.~\ref{Fig_cach_setup}), we
also have a privacy requirement on the demand in addition to the recovery
requirement.  The privacy constraint is such that user $k$ should not gain any
information about $\ND{k}$. 
To achieve this, we consider some {\em shared randomness} $S_k$ which is shared between user $k$ and the server, and  it is not known to the other users.
The shared randomness can be achieved during the placement phase since the placement is done secretly for each user.
 Random variables $S_0,\ldots, S_{K-1}$ take values in some finite alphabets $\cS_0,\ldots, \cS_{K-1}$, respectively.
The set of random variables $(S_0, \ldots, S_{K-1})$ is denoted by $\bar{S}$. Let $\cP$ denote the set of values of  a private randomness $P$ available at the server.
The random variables $P \cup \{S_k:  k \in [K] \} \cup  \{D_k: k \in [K] \} \cup \{W_i: i \in [N]\}$  are independent of each other.

\begin{figure}[htb]
  \centering
   \includegraphics[scale=0.4]{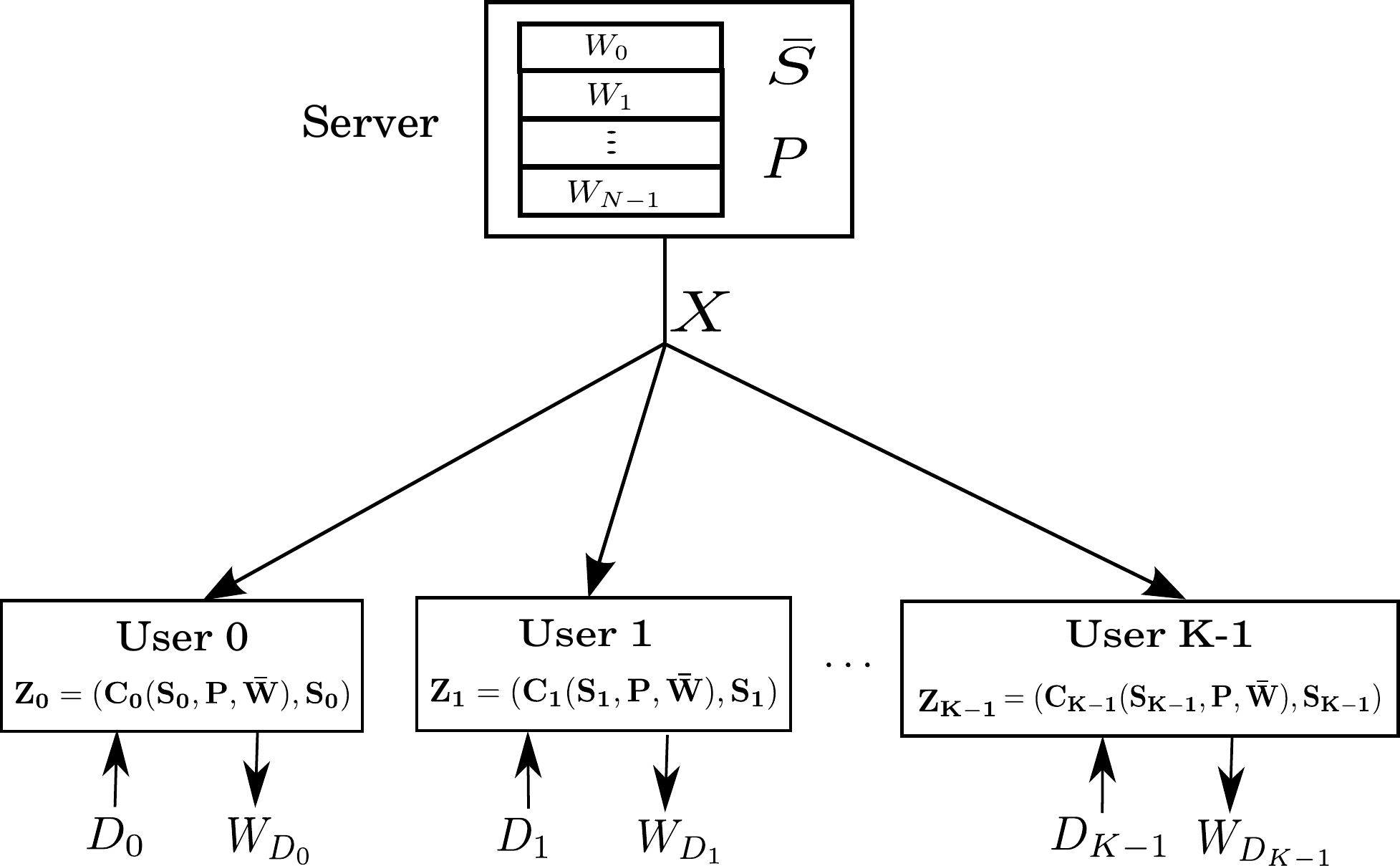}
  \caption{Demand-private coded caching model.}
  \label{Fig_cach_setup}
\end{figure}

{\bf Non-private coded caching scheme:}
A {\it non-private coded caching scheme} consists of the following.

{\it Cache encoding functions:} For $k \in [K]$, the cache encoding function for the $k$-th user is a map
\begin{align}
C_{k}: {[2^F]}^N  \rightarrow [2^{MF}], \label{Def_cach_enc_np}
\end{align} 
and the cache content $Z_k$ is given by $Z_k =C_k(\bar{W})$. 

{\it Broadcast transmission encoding function:} The  transmission
encoding is a map
\begin{align}
E: {[2^F]}^N \times \cD_0  \times \cdots \times \cD_{K-1}  \rightarrow [2^{RF}], \label{Def_Tx_enc_np}
\end{align}
and the transmitted message is given by $X=(E(\bar{W}, \bar{D}), \bar{D})$. 

{\it Decoding functions:} User $k$ uses a decoding function
\begin{align}
G_{k}:  \cD_0 \times \cdots \times \cD_{K-1}  \times [2^{RF}] \times [2^{MF}]  \rightarrow [2^{F}]. \label{Def_dec_np}
\end{align}
Let $\cC = \{C_k: k=0,\ldots,K-1\}$ and $\cG =  \{G_k: k=0,\ldots,K-1\}$. Then
the triple $(\cC, E, \cG)$ is called an
$(N,K,M,R)$-non-private scheme if it satisfies 
\begin{align} 
W_{D_{k}} = G_k(\bar{D}, E(\bar{W}, \bar{D}), C_k(\bar{W}))
\label{Eq_dec_cond}
\end{align}
for all values of $\bar{D}, \bar{W}$.
A memory-rate pair $(M,R)$ is said to be {\em achievable} for the $(N,K)$ coded
caching problem if there exists an $(N,K,M,R)$-non-private scheme  for some
$F$.

{\bf Private coded caching scheme:} A {\it private coded caching scheme}
consists of the following. 

{\it Cache encoding functions:}
For $k \in [K]$, the cache encoding function for the $k$-th user is given by
\begin{align}
C_{k} :\cS_k \times \cP \times {[2^F]}^N  \rightarrow [2^{MF}], \label{Def_cach_enc}
\end{align} 
and the cache content $Z_k$ is given by 
\mbox{$Z_k =(C_k(S_k, P, \bar{W}), S_k)$}. 

{\it Broadcast transmission encoding function:}
The   transmission encoding functions are
\begin{align*}
&E: {[2^F]}^N \times \cD_0 \times \cdots \times \cD_{K-1} \times\cP \\
& \qquad \qquad \qquad \times \cS_0 \times \cdots \times \cS_{K-1} \rightarrow [2^{RF}],\\
&J:\cD_0 \times \cdots \times \cD_{K-1} \times \cP \times \cS_0 \times \cdots \times \cS_{K-1} \rightarrow \cJ.
\end{align*}
The transmitted message $X$ is given by
\begin{align*}
 X=\left(E(\bar{W}, \bar{D}, P, \bar{S}), J(\bar{D}, P, \bar{S}) \right). 
\end{align*}
Here $\log_2 |\cJ|$ is negligible\footnote{ The auxiliary
transmission $J$ essentially captures any additional transmission, that does not
contribute any rate, in addition to the main
payload. Such auxiliary transmissions of negligible rate are used even in non-private
schemes without being formally stated in most work. For example, the
scheme in~\cite{Maddah14} works only if the server additionally transmits the demand vector
in the delivery phase. We have chosen to
formally define such auxiliary transmission here.} compared to file size $F$.

{\it Decoding functions:}
User $k$ has a decoding function
\begin{align}
G_{k} : \cD_k \times \cS_k \times \cJ \times [2^{RF}] \times [2^{MF}]  \rightarrow [2^{F}]. \label{Def_dec}
\end{align}
Let $\cC = \{C_k: k=0,\ldots,K-1\}$ and $\cG =  \{G_k: k=0,\ldots,K-1\}$. The
tuple  $(\cC, E, J,\cG)$ is called as an $(N,K,M,R)$-private scheme if it
satisfies the following decoding and privacy conditions:
\begin{align} 
&W_{D_{k}} = G_k\bigl(D_k,S_k,J(\bar{D}, P, \bar{S}, ), \notag\\
& \qquad \qquad \qquad \qquad E(\bar{W}, \bar{D},P,\bar{S}), C_k(S_k, P,\bar{W})\bigr), \label{Eq_decod_cond}\\
& I\left(\ND{k};Z_k,\TRS, D_k \right)  = 0 \quad \text{ for }   k=0,\ldots,K-1. \label{Eq_instant_priv}
\end{align}
A memory-rate pair $(M,R)$ is said to be {\em achievable with demand privacy}
for the $(N,K)$ coded caching problem if there exists an $(N,K,M,R)$-private
scheme  for some  $F$.


The {\em memory-rate trade-off with demand privacy} is defined as 
\begin{align}
R^{*p}_{N,K}(M)&=\inf\{R: (M,R) \mbox{ is achievable with demand }  \notag \\
&  \hspace*{-2mm}\mbox{privacy for $(N,K)$ coded caching problem.} \} \label{Eq_opt_rate_priv}
\end{align}
The {\em  memory-rate trade-off } $R^{*}_{N,K}(M)$ for the non-private coded caching problem
is defined similarly.

\ifarxiv 
The following result is known from~\cite{Kamath19,Wan19}.
\begin{theorem}\cite[Theorem~1]{Kamath19}
\label{th:known1}
 If there exists an $(N,NK,M,R)$ non-private scheme, then
there exists an $(N,K,M,R)$-private scheme.
\end{theorem}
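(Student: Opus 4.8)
\textit{Proof plan.} The plan is to realize the private scheme by a reduction in which each of the $K$ private users impersonates one block of $N$ out of the $NK$ users of the given non-private scheme, with the shared randomness hiding \emph{which} member of the block is being impersonated. Fix an $(N,NK,M,R)$-non-private scheme with cache functions $C'_0,\dots,C'_{NK-1}$, transmission function $E'$, and decoders $G'_0,\dots,G'_{NK-1}$, and partition the $NK$ virtual users into $K$ groups, group $k$ being $\{kN,kN+1,\dots,kN+N-1\}$. I would let $S_k$ be uniform over $[N]$, independent across $k$ (no server randomness $P$ is needed). In the placement phase, give user $k$ the single virtual cache $C'_{kN+S_k}(\bar{W})$, together with $S_k$; this is exactly an $MF$-bit payload plus the small side information already permitted in $Z_k$ by the private-scheme definition.

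For the delivery phase the server, knowing $\bar{D}$ and $\bar{S}$, builds a length-$NK$ virtual demand vector $\bar{D}'$ by making the $N$ demands of group $k$ the cyclic shift of $(0,1,\dots,N-1)$ by $t_k:=(D_k-S_k)\bmod N$; that is, virtual user $kN+i$ demands $(D_k-S_k+i)\bmod N$. This choice has two properties I will use: within each group the demands run over all of $[N]$, and virtual user $kN+S_k$ demands precisely $(D_k-S_k+S_k)\bmod N=D_k$. The server transmits $E'(\bar{W},\bar{D}')$ as the rate-$R$ payload and places the shifts $(t_0,\dots,t_{K-1})$ into the negligible-rate auxiliary message $J$. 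Decoding is then immediate: user $k$ recomputes $t_k$ from $(D_k,S_k)$, reads the remaining shifts from $J$, reconstructs $\bar{D}'$, and runs $G'_{kN+S_k}$ on $(\bar{D}',E'(\bar{W},\bar{D}'),C'_{kN+S_k}(\bar{W}))$, which by the non-private decoding guarantee \eqref{Eq_dec_cond} returns $W_{D_k}$. The memory is $M$ and the rate is $R$ (the extra $O(\log N)$ bits are exactly the stored $S_k$ and the contents of $J$, both allowed by the definition), so \eqref{Eq_decod_cond} holds and only \eqref{Eq_instant_priv} remains.

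The privacy claim $I(\ND{k};Z_k,X,D_k)=0$ is the step that needs a genuine argument. The point to establish is that the shifts $t_{k'}=(D_{k'}-S_{k'})\bmod N$ for $k'\neq k$ are i.i.d.\ uniform on $[N]$ and \emph{jointly independent of $\ND{k}$} and of $(\bar{W},D_k,S_k)$: since each $S_{k'}$ is uniform on $[N]$ and independent of $D_{k'}$, each individual $t_{k'}$ is uniform and carries no information about $D_{k'}$, and distinct groups are independent. Granting this, I would finish by observing that user $k$'s entire view $(Z_k,X,D_k)$ --- namely $C'_{kN+S_k}(\bar{W})$, $S_k$, $E'(\bar{W},\bar{D}')$, $J$, and $D_k$ --- is a deterministic function of the tuple $(\bar{W},D_k,S_k,\{t_{k'}\}_{k'\neq k})$, because $\bar{D}'$ is itself a function of $(t_0,\dots,t_{K-1})$ with $t_k=(D_k-S_k)\bmod N$, the virtual caches depend only on $\bar{W}$, and $J$ encodes the shifts. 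As that tuple is independent of $\ND{k}$, so is the view, which gives \eqref{Eq_instant_priv}.

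I expect no serious obstacle here: the construction is elementary once each private user is viewed as simulating a block of virtual users, and the privacy proof is just the independence bookkeeping above. The only two points requiring care are (a) checking that storing $S_k$ in $Z_k$ and sending the shifts in $J$ do not inflate $(M,R)$ --- which is precisely what the negligible auxiliary transmission in the definition accommodates --- and (b) choosing $\bar{D}'$ so that it simultaneously meets the decoding requirement $d'_{kN+S_k}=D_k$ and depends on each group $k'\neq k$ only through a shift that is independent of $\ND{k}$; the cyclic shift is simply the most convenient such choice, and any per-group permutation determined by $(D_k,S_k)$ alone would serve equally well.
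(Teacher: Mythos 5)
Your construction is correct and is essentially the same as the paper's: it is exactly the stack-of-$N$-virtual-users reduction of Theorem~\ref{Thm_genach}, with the cache $C^{(np)}_{kN+S_k}(\bar W)$ plus $S_k$, the per-group cyclic-shift demand vector, the shifts sent in the negligible auxiliary message, and privacy from $S_{k'}$ acting as a one-time pad on $D_{k'}$. Since your virtual demands always lie in $\DRS$, your argument in fact establishes the paper's stronger refinement, of which the stated theorem is the special case.
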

\fi

\begin{remark}
\label{rem:perfectprivacy}
	The model studied in~\cite{Wan19} assumed that the server can 
privately transmit to any subset of users by encryption using shared keys. The key length required for achieving such private multicast under information-theoretic
privacy using broadcast transmissions is the same as the length of the
multicast message. In that case, storing such keys in the cache will
also contribute to the cache memory requirement. The required key rates are
negligible only under computational privacy requirement, as noted  in~\cite{Wan19}. Since
it was assumed that the shared keys are of negligible rates, the overall model
in~\cite{Wan19} does not ensure information-theoretic privacy under broadcast
transmission. In contrast, we assume broadcast transmission in the delivery phase and 
we study perfect privacy in information-theoretic sense.
 \end{remark}

 \section{Results}
 \label{sec_results}


In~\cite[Example~1]{Maddah14}, it was shown that we can achieve rate
$\min\{N,K\}(1-M/N)$ for non-private scheme without any coding in cache
placement or in broadcast transmission. Next we show that the same rate is
achievable under perfect privacy of the demands under broadcast transmissions. The achievability of this rate using private unicast transmissions
is simple \cite[Theorem 1]{Wan19}, and this implies the
achievability under computational privacy guarantee using broadcast transmissions.

\begin{theorem}
\label{th:basic}
For any $M$, the memory-rate pair $(M,\min \{N,K\}(1-M/N))$ is achievable in
coded caching under information theoretic demand-privacy.  
\end{theorem}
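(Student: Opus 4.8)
The proof naturally splits according to whether $N\le K$ or $N>K$; the first regime is essentially free and the second carries all the content. In the regime $N\le K$ one has $\min\{N,K\}(1-M/N)=N-M$, and I would simply give a demand‑oblivious scheme: write each file as $W_i=(W_i^{(c)},W_i^{(t)})$ with $|W_i^{(c)}|=(M/N)F$ and $|W_i^{(t)}|=(1-M/N)F$, let every user store $Z_k=(W_0^{(c)},\dots,W_{N-1}^{(c)})$ — exactly $MF$ bits, needing no shared randomness — and let the server broadcast $X=(W_0^{(t)},\dots,W_{N-1}^{(t)})$, of total size $N(1-M/N)F$, so that the rate is $\min\{N,K\}(1-M/N)$. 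Every user reassembles its requested file, and privacy is immediate because $Z_k$ and $X$ are functions of $\bar W$ alone and hence independent of all demands, so $I(\ND{k};Z_k,X,D_k)=0$.

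For $N>K$ the target rate $K(1-M/N)$ is strictly below $N-M$ whenever $M<N$, so the delivery must refresh only $K$ portions of files and the whole point is to hide which ones. I would keep the prefetching above — each user stores an $M/N$‑fraction of every file, but now the particular portion user $k$ stores is chosen by its shared randomness $S_k$ in a way unknown to the others — and have the server refresh, inside a single broadcast, the missing $(1-M/N)F$ bits of each user's demanded file, masking the refresh block aimed at user $k$ by material (bits of cached file‑portions, and, where the prefetch alone is too small, fresh key‑like randomness) that is information‑theoretically hidden from every other user. With the masking in place, the block aimed at user $k$ looks to any other user like an independent uniform string, so the joint observation of any fixed user is a function of its own cache, its own demand, and a collection of independent uniform strings, hence independent of $\ND{k}$. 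I would first establish the memory‑rate pair at the corner point(s) of this construction (where the prefetch exactly suffices as a one‑time pad) and then fill in every other $M$ by memory‑sharing with the trivial point $(N,0)$. An alternative route is the known reduction (Theorem~\ref{th:known1}): any non‑private $(N,NK,M,R)$ scheme yields a private $(N,K,M,R)$ scheme, and the coded caching scheme of~\cite{Maddah14} applied to $NK$ users supplies the non‑private rate $K(1-M/N)$ in the relevant memory range, again patched by memory‑sharing, with privacy following because in the derived scheme every group of $N$ virtual users is forced to request a full permutation of $[N]$ offset by an independent uniform shift.

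The routine parts are the decoding verifications and the memory‑sharing arithmetic (the line through the chosen corner point and $(N,0)$ coincides with $R=K(1-M/N)$). The step I expect to be the main obstacle is verifying the privacy condition $I(\ND{k};Z_k,X,D_k)=0$ \emph{exactly}: one must show that the mask on each refresh block — equivalently, the random shift applied to each group's virtual demand vector in the reduction‑based proof — is \emph{perfectly} independent of everything the eavesdropping user sees, so that no residual correlation survives, in particular none of the kind that would reveal whether two other users happen to request the same file. Forcing this to hold on the nose, rather than merely up to an exponentially small error, is what dictates how the cache and the shared randomness must be structured, and is the crux of the achievability argument.
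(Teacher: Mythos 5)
Your treatment of the regime $N\le K$ coincides with the paper's and is fine; the gap is in the regime $N>K$, which you correctly identify as carrying all the content but do not actually resolve. Your plan is to mask each user's refresh block with material hidden from the other users, i.e.\ a per-user one-time pad of length up to $(1-M/N)F$. Under this paper's model such pads are not free: the shared randomness $S_k$ is part of the cache content, and Remark~\ref{rem:perfectprivacy} is explicit that keys whose length is comparable to the transmitted blocks must be charged to the cache memory --- that is exactly the distinction the paper draws with \cite{Wan19}, whose key-based unicast scheme gives only computational privacy once the keys are required to be of negligible rate. The other masking source you propose, cached file bits, cannot serve as a perfect pad either: every other user stores an $M/N$ fraction of the very same files (and decodes one of them in full), so the mask is not independent of that user's observation; in particular a user whose demand equals $D_k$ knows the candidate pad bits exactly. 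Hence neither instantiation yields $I(\ND{k};Z_k,X,D_k)=0$ at the claimed memory-rate pair, and you yourself defer precisely this step as ``the crux''; it is the missing construction, not a routine verification. The case $M=0$ makes the failure stark: the theorem claims rate $K$ with no cache and hence no room for any $\Theta(F)$-length pad. Your fallback via Theorem~\ref{th:known1} also fails in the critical regime: with unrestricted demands the non-private $(N,NK)$ problem needs rate at least $N$ at $M=0$, and the Maddah-Ali--Niesen rate $NK(1-M/N)/(1+KM)$ exceeds $K(1-M/N)$ whenever $M<(N-1)/K$, so no memory-sharing with $(N,0)$ or with any corner point recovers the line $R=K(1-M/N)$ for small $M$ when $N>K$.

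The paper's scheme avoids padding the content altogether. All users cache the \emph{same} $M/N$ fraction of every file (no randomness in placement). The delivery consists of $K$ slots of $(1-M/N)F$ bits each: the distinct demanded files' uncached parts are placed in uniformly random distinct slots, users with equal demands share a single slot, and unused slots are filled with i.i.d.\ uniform bits; each user $k$ is told only its own slot index through a pointer $P_k\oplus_K S_k$, encrypted with a key $S_k\in[K]$ of negligible ($\log K$-bit) size. Privacy holds exactly because the payload together with the cached parts is marginally uniform and independent of the encrypted pointers and of all demands, and the dummy slots plus the single-slot rule for repeated demands hide even whether two other users want the same file. Encrypting the \emph{position} rather than the content --- so that the shared randomness stays negligible --- is the idea your sketch is missing.
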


\begin{proof}
\ifarxiv
In the placement phase, the caches of all
users are populated with the same $M/N$ fraction of each file. 
\fi
Let each file $W_i$ be split in two parts: cached part $W^{(c)}_i$
of length $FM/N$, and uncached part $W^{(u)}_i$
of length $F(1-M/N)$.
The cache contents of all the users are the same, and given by
$Z_k=(Z^{(0)},Z^{(1)},\cdots,Z^{(N-1)})$, where
\ifncc
$Z^{(i)}=W^{(c)}_i$ for each $i$. 
\fi
\ifarxiv
\begin{align*}
Z^{(i)}=W^{(c)}_i \text{ for }i=0,1,\cdots,N-1.
\end{align*}
\fi
To describe the delivery phase, we consider two cases:

\ifarxiv
\noindent
\underline{Case 1: $N \leq K$}
\fi

For $N<K$, the server broadcasts the remaining $(1-M/N)$ fraction of each
file. 
\ifncc
We now consider the case $N>K$.
\fi
\ifarxiv
This scheme achieves privacy because the transmission does not depend
on the demands of the users.

\noindent
\underline{Case 2: $N > K$}

\fi
%
Let $D_0,D_1,\cdots,D_{K-1}$ be the demands of the users.
\ifarxiv
The transmission $X$ has two parts $(X',J)$, where 
$X'=(X'_0,X'_1,\cdots,X'_{K-1})$ is the main payload, and $J$ is
the auxiliary transmission of negligible rate which helps each user find the corresponding decoding function. 
For each $i$, $X'_i$
is either $W^{(u)}_{D_j}$ for some $j$ or random bits
of the same length. In particular,
the position of $W^{(u)}_{D_j}$ in $X'$ is denoted by a random variable
$P_j\in [K]$. 
\fi
The random variables $P_0,P_1,\cdots,P_{K-1}$ are defined inductively
as
\begin{align*}
P_i = & \begin{cases} 
 P_j & \text{if } D_i=D_j \\ & \text{for some } j<i \\
\sim unif([K]\setminus\{P_0,P_1,\cdots,P_{i-1}\})
& \text{if } D_i\neq D_j\\ & \forall j<i.
\end{cases}
\end{align*}
\ifarxiv
Note that each demanded (uncached) file is transmitted only in one component
of the transmission so that one user can not possibly detect the same file
(as its own demand) being transmitted in another component and thus infer
that the corresponding other user also has the same demand.
\fi

The keys $S_0,S_1,\cdots,S_{K-1}\in [K]$ are chosen i.i.d. and uniformly
distributed. 
\ifncc
The transmission $X$ has two parts $(X',J)$, where
$X'=(X'_0,X'_1,\cdots,X'_{K-1})$ is the main payload, and $J$ is
the  auxiliary transmission.
\fi
The transmission is then given by
\begin{align*}
X'_{j} = \begin{cases} W^{(u)}_{D_i} & \text{if } j=P_i \\&  \text{for some }i\in [K]\\
  \sim unif\left( \{0,1\}^{F(1-M/N)}\right) & \text{otherwise.}
\end{cases}
\end{align*}
and
\ifncc
$J = (P_0\oplus_K S_0, P_1\oplus_K S_1,\cdots,P_{K-1}\oplus_K S_{K-1}),$
\fi
\ifarxiv
\begin{align*}
J = (P_0\oplus_K S_0, P_1\oplus_K S_1,\cdots,P_{K-1}\oplus_K S_{K-1}),
\end{align*}
\fi
where $\oplus_K$ denotes the addition modulo $K$ operation.
Since user $k$ knows $S_k$, it can find $P_k$ from $J$. It
then can find $X'_{P_k}=W^{(u)}_{D_k}$, and thus $W_{D_k}=(Z^{(D_k)},X'_{P_k})$.

Next we show that this scheme also satisfies the privacy condition.
Let us denote $Q_i=P_i\oplus_K S_i$ for the ease of writing.
\begin{align}
&I(\ND{k};X,  D_k,Z_k) \notag \\
&=I(\ND{k}; X'_0,\cdots,X'_{K-1}, Q_0,Q_1,\cdots,Q_{K-1}, \notag \\ 
& \qquad \qquad D_k, S_k, W^{(c)}_{0}, \ldots,W^{(c)}_{N-1} ) \notag \\
&=I(\ND{k};  Q_0, \cdots,Q_{K-1}, D_k, S_k)\label{eq:privacy1}\\
& = I(\ND{k};  Q_0,  \cdots,Q_{k-1},Q_{k+1},\cdots,Q_{K-1}, 
  D_k, S_k,P_k)\notag\\
&=0\label{eq:privacy3}
\end{align}
where~\eqref{eq:privacy1} follows because $(X'_0,\cdots,X'_{K-1},W^{(c)}_{0},
\ldots,W^{(c)}_{N-1} )$ is uniformly distributed in $\{0,1\}^{MF+FK(1-M/N)}$,
and is independent of $(\ND{k},Q_0, \cdots,Q_{K-1},D_k, S_k)$,
and \eqref{eq:privacy3} follows because all the random variables in the
mutual information are independent.
%
\ifarxiv
In this scheme, the number of bits broadcasted is $FK(1-M/N)$ 
as the bits transmitted for communicating $J$ is negligible for large $F$.
Thus, the scheme achieves  rate $K(1-M/N)$.
\fi
\end{proof}
	
One natural question that arises in demand-private coded caching is
how much cost it incurs due to the extra constraint of demand privacy. The next
theorem shows that the extra cost is always within a multiplicative factor of
8. 
\begin{theorem}
	\label{Thm_order}
The optimal rates with and without privacy always satisfy the following: 
\begin{align}
\frac{R^{*p}_{N,K}(M)}{R^{*}_{N,K}(M)} & \leq 8.
\end{align}
\end{theorem}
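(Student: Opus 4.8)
The plan is to sandwich $R^{*p}_{N,K}(M)$ between $R^{*}_{N,K}(M)$ (an obvious lower bound, since every private scheme is in particular a non-private scheme serving the identity demands) and a quantity of the form $cR^{*}_{N,K}(M)$ for a constant $c$ no larger than $8$. For the upper bound I would use Theorem~\ref{th:basic}, which gives the private achievability of the pair $(M,\min\{N,K\}(1-M/N))$, and combine it with the known order-optimality of the non-private rate. Concretely, the first step is to recall the standard lower bound on the non-private optimal rate: for all $M\in[0,N]$ one has $R^{*}_{N,K}(M)\ge \tfrac12\min\{N,K\}(1-M/N)$ (this follows from cut-set type bounds or from the known order-optimality results cited in the introduction, e.g.~\cite{Yu18}), and also the trivial bound $R^{*}_{N,K}(M)\ge$ some positive constant times $1$ in the regime where $M$ is bounded away from $N$. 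The cleanest route is: $R^{*p}_{N,K}(M)\le \min\{N,K\}(1-M/N)$ by Theorem~\ref{th:basic}, and $R^{*}_{N,K}(M)\ge \tfrac18\min\{N,K\}(1-M/N)$ from existing converse bounds on the non-private problem; dividing gives the factor $8$.

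The key steps in order: (i) invoke Theorem~\ref{th:basic} to write $R^{*p}_{N,K}(M)\le \min\{N,K\}(1-M/N)$ for every $M$; (ii) cite the best available converse for the non-private problem to get a bound of the form $R^{*}_{N,K}(M)\ge \tfrac{1}{8}\min\{N,K\}(1-M/N)$ valid for all $M\in[0,N]$ — here one has to be a little careful near $M=N$, where both sides vanish, but the inequality holds there trivially, and for $M$ bounded away from $N$ the known multiplicative-gap results (order optimality within a factor of $2$ for the scheme of Yu \emph{et al.}, combined with the cut-set bound) deliver a constant comfortably below $8$; (iii) take the ratio. One should double-check that the constant coming out of the chosen non-private converse, together with the slack in Theorem~\ref{th:basic} (whose rate is at most twice the non-private optimum in the worst regime), multiplies out to at most $8$; this is the only place where the specific numerical constant matters.

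The main obstacle I anticipate is purely bookkeeping: assembling a single clean inequality $R^{*}_{N,K}(M)\ge \tfrac18\min\{N,K\}(1-M/N)$ that holds uniformly over all $(N,K,M)$, rather than in separate regimes. The cut-set bound $R^{*}_{N,K}(M)\ge \max_{s\in[\min\{N,K\}]}\bigl(s-\tfrac{sM}{\lfloor N/s\rfloor}\bigr)$ is the natural tool, and one needs to verify that choosing $s=\min\{N,K\}$ (or a nearby value) already yields a constant-factor match to $\min\{N,K\}(1-M/N)$; the floor function in the denominator is what forces the loss of a small constant. Once that lemma-level inequality is pinned down, the theorem follows in two lines. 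No privacy-specific argument is needed beyond Theorem~\ref{th:basic} itself — the whole content is that the generic private scheme is already order-optimal because the generic non-private converse is order-tight.
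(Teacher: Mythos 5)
Your step (ii) is the problem: the inequality $R^{*}_{N,K}(M)\ge \tfrac18\min\{N,K\}(1-M/N)$ is simply false outside the small-memory regime, so the ratio you want to take does not close. The rate $\min\{N,K\}(1-M/N)$ of Theorem~\ref{th:basic} is essentially an uncoded-delivery rate; the non-private optimum enjoys the global coded-caching gain and behaves like $\frac{K(1-M/N)}{1+KM/N}$ up to constants. Concretely, take $N=K$ large and $M=N/2$: then $\min\{N,K\}(1-M/N)=N/2$, while the Maddah-Ali--Niesen scheme already gives $R^{*}_{N,K}(N/2)\le \frac{K/2}{1+K/2}<1$, so no constant $c$ can make $R^{*}_{N,K}(M)\ge \tfrac1c\min\{N,K\}(1-M/N)$ hold uniformly. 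Your fallback, the cut-set bound $\max_s\bigl(s-\tfrac{sM}{\lfloor N/s\rfloor}\bigr)$, cannot rescue this: at $s=\min\{N,K\}$ it is already negative in that example, and over all $s$ it scales like $N/M$, far below $\tfrac18\cdot N/2$. So the claim in your last sentence --- that no privacy-specific ingredient beyond Theorem~\ref{th:basic} is needed because that scheme is ``already order-optimal'' --- is exactly what fails: the scheme of Theorem~\ref{th:basic} is order-optimal only when the coded-caching gain $1+KM/N$ is bounded, i.e.\ roughly $M\lesssim N/K$.

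The paper's proof is regime-split for precisely this reason. For all memory regimes except $0\le M\le N/K$ with $N>K$, it invokes the previously established factor-8 order optimality of the private scheme of~\cite{Kamath19} (built from a non-private scheme for $NK$ users, which does inherit the coded gain), not Theorem~\ref{th:basic}. Theorem~\ref{th:basic} is used only to patch the remaining regime: there $R^{*p}_{N,K}(M)\le K$, while monotonicity of the Yu--Maddah-Ali--Avestimehr rate gives $R^{\text{YMA}}_{N,K}(M)\ge (K-1)/2$ for $0\le M\le N/K$, so the ratio to $R^{\text{YMA}}$ is at most $2K/(K-1)\le 4$, and the factor-2 optimality of YMA from~\cite{Yu19} yields $8$. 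If you want to keep your two-line structure, you must either restrict your comparison of $\min\{N,K\}(1-M/N)$ against $R^{*}$ to the small-memory regime and import a genuinely different private achievability result (with coded gain) for the rest, or reproduce the paper's split; as written, the argument has no valid path through moderate and large $M$.
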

\begin{proof}
	The achievable memory-rate pair using the scheme given in~\cite{Kamath19} is
	shown~\cite[Theorem~2]{Kamath19} to be within a factor of 8 from
	${R^{*}_{N,K}(M)}$ for all memory regimes except for $0\leq M \leq  N/K$ when
	$N > K$. So, we have Theorem~\ref{Thm_order}  for all those memory
	regimes. Next we show that Theorem~\ref{Thm_order} also holds for $0\leq M \leq  N/K$ when $N > K$.
	
	First, let us consider $M=0$. Since $N > K$, it follows from Theorem~\ref{th:basic} that $R^{*p}_{N,K}(0) \leq K$. 
	So, it is clear that
		\begin{align}
	R^{*p}_{N,K}(0) \leq K \quad \text{ for } M\geq 0. \label{Eq_Rp_uppr}
	\end{align}
	Now let us consider the scheme given in~\cite{Yu18} under no privacy constraint and let $R^{\text{YMA}}_{N,K}(M)$ denote the achievable rate
	using this scheme. For $M=N/K$, $R^{\text{YMA}}_{N,K}(M)$ is given by
	\begin{align}
		R^{\text{YMA}}_{N,K}(N/K) = (K-1)/2. \notag
	\end{align}
	
Further, $	R^{\text{YMA}}_{N,K}(M)$ is monotonically non-increasing in $M$  which implies that 
\begin{align}
R^{\text{YMA}}_{N,K}(M) \geq (K-1)/2 \quad \text{ for } 0\leq M \leq  N/K. \label{Eq_YMA_lowr}
\end{align}
Then, it follows from~\eqref{Eq_Rp_uppr} and~\eqref{Eq_YMA_lowr} that, for $0\leq M \leq  N/K$  
\begin{align}
\frac{R^{*p}_{N,K}(M)}{	R^{\text{YMA}}_{N,K}(M)} &\leq \frac{2 K}{K-1} \notag \\
& \leq 4 \quad \text{ for } K \geq 2. \label{Eq_Rp_bnd}
\end{align}
Furthermore, it was shown in~\cite{Yu19} that
\begin{align}
\frac{	R^{\text{YMA}}_{N,K}(M)}{R^{*}_{N,K}(M )} & \leq 2. \label{Eq_R_YMA_bnd}
\end{align}
So, \eqref{Eq_Rp_bnd} and \eqref{Eq_R_YMA_bnd} imply that
\begin{align}
\frac{R^{*p}_{N,K}(M)}{R^{*}_{N,K}(M)} & \leq 8 \text{ for } 0\leq M \leq  N/K.
\end{align}
This completes the proof of the theorem. 
\end{proof}

Theorem~\ref{Thm_order} completes the order optimality result of demand-private coded
caching. We also note that under computational privacy guarantee,
the order optimality for all memory regimes was given in~\cite{Wan19}.

A demand-private scheme for $N$ files and $K$ users can be obtained using
an existing non-private achievable scheme for $N$ files and $NK$
users as a blackbox. Here every user is associated with a stack of $N$ users in
the non-private caching problem. For
example, demand-private schemes for $N=K=2$ are obtained from the non-private
schemes for $N=2$ and $K=4$. 
We use the ideas from the scheme presented in
\cite{Kamath19}, where only certain types of demand vectors
for the non-private scheme are used in the private scheme. 
Next we define this particular subset of demand vectors.

Consider a non-private coded caching problem with $N$ files and $NK$ users. 
A demand vector $\bd$ in this problem is an $NK$-length vector, 
where the $j^{\text{th}}$ component denotes the demand of user $j$.
	Then $\bd$ can also be represented as $K$ subvectors of length $N$, i.e.,	
	$$\bd= [\bd^{(0)},\bd^{(1)},\ldots,\bd^{(K-1)}],$$
where $\bd^{(i)}\in [N]^N$ is  an $N$-length vector for all $i \in [K]$. 
We now define a ``restricted demand subset'' $\DRS$.

 \begin{definition}[Restricted Demand Subset $\DRS$]
	\label{Def_dmnd_subst}
The restricted demand subset $\DRS$ for an $(N,NK)$ coded caching problem
is the set of all $\bd$ such that $\bd^{(i)}$ is a cyclic shift of the vector $(0, 1,  \ldots, N-1)$ for all $i=0,1, \ldots, K-1$.
\end{definition}
		Since $N$ cyclic shifts are possible for each $\bd^{(i)}$, there are a total of $N^K$ such demand vectors in $\DRS$. 

For a given $\bd \in \DRS$ and $i\in [K]$, let $c_i$ denote the number of right cyclic
shifts of $(0,1,\ldots,N-1)$ needed to get $\bd^{(i)}$. Then, 
$\bd \in \DRS$ is  uniquely identified by the vector $\bar{c}(\bd)
:= (c_1, \ldots, c_K)$. For $N=2$ and $NK =4$, the demands in $\DRS$ and
their corresponding $\bar{c}(\bar{d}_s)$ are given in Table~\ref{Tab_2x2}.

\begin{table}[h]
	\begin{center}
		\begin{tabular}{|c|c|c|c|c|}
			\hline
			$D_{0}$ & $D_{1}$ & $D_{2}$ & $D_{3}$ & $\bar{c}(\bar{d}_s)$ \\
			\hline
			$0$ & $1$ & $0$ & $1$ & $(0,0)$\\
			\hline
			$0$ & $1$ & $1$ & $0$ & $(0,1)$\\
			\hline
			$1$ & $0$ & $0$ & $1$ & $(1,0)$\\
			\hline
			$1$ & $0$ & $1$ & $0$ & $(1,1)$\\
			\hline
		\end{tabular}
	\end{center}
	\caption{Demand subset $\DRS$ for $N=2$ and $NK=4$.}
	\label{Tab_2x2}
\end{table}

A related concept is the ``demand type'' used in \cite{Tian2018}.
\begin{definition}[Demand Types]
	In $(N,K)$-non-private coded caching problem, for a given demand vector
$\bar{d}$, let $t_i$ denote the number of users requesting file $i$, where
$i=0, \ldots, N-1$. Demand type of $\bar{d}$, denoted by $T(\bar{d})$, is defined
as the $N$-length vector 
$T(\bar{d}):=\bar{t} = (t_1, \ldots, t_N)$. The type class of
$\bar{t}$ is defined as $\cD_{\bar{t}}=\{\bd|T(\bd)=\bar{t}\}$. \end{definition}

Clearly, the restricted demand subset $\DRS$ is a subset of the type class
$(K,K,\ldots,K)$, i.e., 
\begin{eqnarray}
&&\DRS \subseteq \cD_{(K,K,\ldots,K)}.\label{eq:drstype}
\end{eqnarray}

A non-private scheme for an $(N,K)$ coded caching problem that satisfies all
demand vectors in a particular demand subset $\cD\subset[N]^K$, is called a
$\cD$-non-private scheme. Clearly, for $\cD_1\subset \cD_2$, a
$\cD_2$-non-private scheme is also a $\cD_1$-non-private scheme. In particular,
achievable rates for satisfying various demand
type-classes were studied in \cite{Tian2018}, and their
results are useful in our schemes for the type $(K,K,\cdots,K)$ due to the
relation \eqref{eq:drstype}.

\ifarxiv
We now present a refinement of Theorem~\ref{th:known1}.
\fi

 \begin{theorem}
        \label{Thm_genach}
 If there exists an $(N,NK,M,R)$ \DRS-non-private scheme, then
there exists an $(N,K,M,R)$-private scheme.
 \end{theorem}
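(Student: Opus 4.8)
Looking at this, the plan is to refine Theorem~\ref{th:known1} (which says an $(N,NK,M,R)$ non-private scheme yields an $(N,K,M,R)$-private scheme) by showing only the demands in $\DRS$ need to be served. I would follow the construction of~\cite{Kamath19} essentially verbatim, but audit exactly which demand vectors of the big $(N,NK)$ problem are invoked.

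\medskip

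\noindent\textbf{Setup of the construction.} Start with an $(N,NK,M,R)$ $\DRS$-non-private scheme $(\cC,E,\cG)$, with cache encoders $C_0,\ldots,C_{NK-1}$, transmission encoder $E$, decoders $G_0,\ldots,G_{NK-1}$. For the private $(N,K)$ scheme, associate user $k\in[K]$ with the block of $N$ virtual users $\{kN,kN+1,\ldots,kN+N-1\}$. The idea is: user $k$ holds the caches of all $N$ virtual users in its block, but a private offset $S_k$ (the shared randomness) tells it \emph{which} of those $N$ virtual positions ``belongs'' to its true demand $D_k$. Concretely, let $S_k$ be uniform on $[N]$, known to user $k$ and the server. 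Given the true demand vector $\bar D=(D_0,\ldots,D_{K-1})$ and the offsets $\bar S=(S_0,\ldots,S_{K-1})$, the server forms a virtual demand vector $\bd$ for the $(N,NK)$ problem by setting $\bd^{(k)}$ to be the cyclic shift of $(0,1,\ldots,N-1)$ for which position $S_k$ within block $k$ carries the value $D_k$ — i.e. $c_k(\bd)\equiv D_k - S_k \pmod N$ in the notation $\bar c(\bd)$. By construction $\bd\in\DRS$, so the $\DRS$-non-private scheme serves it. The server runs $E(\bar W,\bd)$ as the main payload $X'$, and uses the negligible-rate auxiliary transmission $J$ to convey $\bar c(\bd)$ (or equivalently each $c_k$), which is $O(\log N)$ bits, hence rate zero. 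Cache content of user $k$ is the tuple of all $N$ virtual caches in its block together with $S_k$.

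\medskip

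\noindent\textbf{Decoding.} User $k$ knows $S_k$ and, from $J$, knows $c_k$, hence knows that the virtual user at block-position $S_k$ requested exactly $D_k$; call that virtual index $j^*(k)=kN + S_k'$ for the appropriate position. It applies the non-private decoder $G_{j^*(k)}$ to the received payload, its stored virtual cache for $j^*(k)$, and the virtual demand vector $\bd$ (reconstructible from $J$), recovering $W_{D_k}$. Here we only need the decoding guarantee \eqref{Eq_dec_cond} for demand vectors in $\DRS$, which is exactly what a $\DRS$-non-private scheme provides.

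\medskip

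\noindent\textbf{Privacy.} This is the step needing care — it's where I'd expect the main (though modest) obstacle, since it must be checked that restricting to $\DRS$ doesn't break the privacy argument of~\cite{Kamath19}. We must show $I(\ND{k}; Z_k, X, D_k)=0$. The payload $X'=E(\bar W,\bd)$ and $J$ depend on $\ND{k}$ only through $\bd$, i.e. through $\bar c(\bd)=(D_0-S_0,\ldots,D_{K-1}-S_{K-1})\bmod N$ and the files. For $i\neq k$, the component $c_i = D_i - S_i$ is uniform on $[N]$ and independent of everything in user $k$'s view (including $D_k$, $S_k$, and $\ND{k}$) because $S_i$ is uniform, independent, and unknown to user $k$; thus $(\bd, X', J)$ is conditionally independent of $\ND{k}$ given $(D_k,S_k,Z_k)$ — the standard one-time-pad argument. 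The fact that $\bd$ ranges over $\DRS$ rather than all of $[N]^{NK}$ is harmless: the map $(\bar D,\bar S)\mapsto \bd$ lands in $\DRS$ by design, and the pad argument only uses that each $S_i$ masks $D_i$ additively mod $N$. I would write the chain of equalities analogous to \eqref{eq:privacy1}--\eqref{eq:privacy3}, using that the virtual caches $Z_k$ and the payload are functions of $(\bar W, \bd)$ with $\bar W$ independent of demands, and that the masked coordinates $\{c_i : i\neq k\}$ are independent uniform given user $k$'s side information. The only genuinely new observation relative to Theorem~\ref{th:known1} is the bookkeeping that all invoked virtual demand vectors lie in $\DRS$; the decoding and privacy verifications are then the same as in~\cite{Kamath19}.
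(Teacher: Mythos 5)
Your overall route is the same as the paper's: map the real demands into the restricted set via the shared offsets (your $c_k \equiv D_k - S_k$ is the paper's $S_k\ominus D_k$ up to a harmless sign convention), announce the masked shifts as the negligible-rate auxiliary transmission, decode with the non-private decoder of virtual user $kN+S_k$, and argue privacy by the one-time-pad property of the $S_i$'s. All of that matches the paper's proof and is sound.

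However, there is one genuine error in your construction: you set the cache content of user $k$ to be \emph{all} $N$ virtual caches of its block, $\bigl(C^{(np)}_{kN}(\bar W),\ldots,C^{(np)}_{kN+N-1}(\bar W)\bigr)$, together with $S_k$. Each virtual cache has $MF$ bits, so your placement uses $NMF$ bits and the derived scheme has memory-rate pair $(NM,R)$, not the claimed $(M,R)$. This is not cosmetic: for $N=K=2$ it would turn the point $(1/3,4/3)$ into $(2/3,4/3)$ and the refinement would no longer yield the corner point needed for Theorem~\ref{Thm_N2K2_region}. The role of $S_k$ in the correct construction (and in the paper) is precisely to \emph{select which single virtual cache is placed}: $C_k(S_k,\bar W)=C^{(np)}_{kN+S_k}(\bar W)$, placed secretly during the placement phase; then the shift of block $k$ is chosen so that virtual user $kN+S_k$ demands $D_k$. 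Your own decoding step only ever uses the one cache at index $kN+S_k$, so the fix is immediate, but as written the memory claim of the theorem is not established. With that correction, your decodability check (only demands in $\DRS$ are ever invoked) and your privacy argument (the other users' masked shifts are uniform and independent of user $k$'s view, and the caches and payload are functions of $(\bar W,\bd)$ with $\bar W$ independent of the demands) go through exactly as in the paper.
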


The proof will construct an $(N,K,M,R)$-private scheme using an $(N,NK,M,R)$
\DRS-non-private scheme as a blackbox using ideas from~\cite{Kamath19}. 
We first give an example to illustrate this construction for $N=2,K=2$
using only the restricted demand subset for a $(2,4,\frac{1}{3},
\frac{4}{3})$ $\cD_{(2,2)}$-non-private scheme from ~\cite{Tian2018}.  We will
see that this allows a better achievable rate $(\frac{1}{3}, \frac{4}{3})$ for
the $(N=2,K=2)$ demand-private
coded caching problem than what can be achieved for the $N=2,K=4$ non-private
caching problem. 

\begin{example}
	\label{Ex_simple}
We consider the demand-private coded caching problem for $N=2,K=2$.
Using results from \cite{Kamath19} and \cite{Wan19}, we know that a
demand-private scheme of the same rate-pair can be obtained from any 
non-private scheme for $N=2,K=4$. However, it was shown in \cite{Tian2018} that
for the memory $M=1/3$, the optimum transmission rate $R^{*}_{2,4}(1/3) > 4/3$.
It can be shown that other demand-private schemes in \cite{Wan19} also
do not achieve $R=4/3$ for $N=2,K=2$. See Fig.~\ref{Fig_region_N2K2} for reference.

Let $A$ and $B$ denote the two files. We will now
give a scheme which achieves a rate $4/3$ for $M=1/3$ with $F=3l$. We denote the
3 segments of $A$ and $B$ by $A_1,A_2,A_3$ and $B_1,B_2,B_3$ respectively, of $l$ bits each.  
First let us consider a $\DRS$-non-private scheme for $N=2$ and $K=4$ from
\cite{Tian2018}. Let
$C_{i,j}(A,B)$, as shown in Table~\ref{Table_cache_NK2}, correspond to the cache
content of user $2i+j$ in the $\DRS$-non-private scheme.  The transmission
$T_{(i,j)}(A,B), i,j=0,1$, as given in 
\ifarxiv Table~\ref{Tab_Tx}, \fi
\ifncc Table~\ref{Table_cache_NK2}, \fi
is chosen for the demand
$\bar{d} \in \DRS$ such that $ (i,j) = \bar{c}(\bar{d})$. Using
\ifarxiv Tables~\ref{Table_cache_NK2} and \ref{Tab_Tx}, \fi 
\ifncc Table~\ref{Table_cache_NK2}, \fi 
it is easy to verify that the non-private scheme satisfies the decodability condition for demands in \DRS. From this
scheme, we obtain a demand-private scheme for $N=2,K=2$ as follows. 
\ifncc
\begin{table}[h]
	\centering
	\begin{tabular}{|c|c|c|}
		\hline
		$(i,j)$ & Cache Content & Transmission\\
		 & $C_{i,j}(A,B)$ & $T_{(i,j)}(A,B)$\\
		\hline
		$(0,0)$ & $A_1\oplus B_1$ & $B_1, B_2, A_3, A_1\oplus A_2\oplus A_3$\\
		\hline
		$(0,1)$ & $A_3\oplus B_3$ & $A_2, A_3, B_1, B_1\oplus B_2\oplus B_3$\\
		\hline
		$(1,0)$ & $A_2\oplus B_2$ & $B_2, B_3, A_1, A_1\oplus A_2\oplus A_3$\\
		\hline
		$(1,1)$ & $A_1\oplus A_2\oplus A_3 $ & $A_1, A_2, B_3, B_1 \oplus B_2 \oplus B_3 $\\
& $\oplus B_1\oplus B_2\oplus B_3$ &\\
		\hline 
	\end{tabular}
	\caption{Cache contents and transmissions for $(2,2, \frac{1}{3}, \frac{4}{3})$-private scheme.}
	\label{Table_cache_NK2}
\end{table}
\fi
\ifarxiv

\begin{table}[h]
        \centering
        \begin{tabular}{|c|c|}
                \hline
                Cache & Cache Content\\
                \hline
                $C_{0,0}(A,B)$ & $A_1\oplus B_1$ \\
                \hline
                $C_{0,1}(A,B)$ & $A_3\oplus B_3$\\
                \hline
                $C_{1,0}(A,B)$ & $A_2\oplus B_2$ \\
                \hline
                $C_{1,1}(A,B)$ & $A_1\oplus A_2\oplus A_3\oplus B_1\oplus B_2\oplus B_3$\\
                \hline
        \end{tabular}
        \caption{Choices for the caches of user 1 and 2.}
        \label{Table_cache_NK2}
\end{table}

\begin{table}[h]
	\centering
	\begin{tabular}{|c|c|}
		\hline
		$T_{(0,0)}(A,B)$ &  $B_1, B_2, A_3, A_1\oplus A_2\oplus A_3$ \\
		\hline
		$T_{(0,1)}(A,B)$ &  $A_2, A_3, B_1, B_1\oplus B_2\oplus B_3$ \\
		\hline
		$T_{(1,0)}(A,B)$ &  $B_2, B_3, A_1, A_1\oplus A_2\oplus A_3$ \\
		\hline
		$T_{(1,1)}(A,B)$ &  $A_1, A_2, B_3, B_1 \oplus B_2 \oplus B_3 $ \\
		\hline
	\end{tabular}
	\caption{Transmissions for $(2,2, \frac{1}{3}, \frac{4}{3})$-private scheme.}
	\label{Tab_Tx}
\end{table}
\fi
Let the shared key   $S_k ,k=0,1$ of user $k$ be a uniform binary random variable.
The cache encoding functions and the  transmission encoding function are denoted as
\begin{align*}
C_k(S_k, A,B) & = C_{k, S_k} (A,B) \text{ for } k=0,1, \\
E(A,B, D_0, D_1, S_0, S_1) & = T_{(D_0\oplus S_0, D_1 \oplus S_1)}(A,B).
\end{align*}
 User $k$ chooses $C_{k, S_k} (A,B)$ given in
Table~\ref{Table_cache_NK2} as the cache encoding function.
\ifarxiv
In the delivery phase, for given $(S_0,S_1)$ and $(D_0,D_1)$,
the server broadcasts both $(D_0 \oplus S_0,D_1
\oplus S_1)$ and $T_{(D_0 \oplus S_0,D_1 \oplus S_1)}(A,B)$.
\fi
\ifncc
The server broadcasts both $(D_0 \oplus S_0,D_1
\oplus S_1)$ and $T_{(D_0 \oplus S_0,D_1 \oplus S_1)}(A,B)$.
\fi
 This choice of
transmission  satisfies the decodability condition due to the way we have
chosen the cache content and also since the chosen non-private scheme
satisfies the decodability condition for demands in \DRS. Further, the broadcast  transmission will
not reveal any information about the demand of one user to the other user since
one particular transmission $T_{(i,j)}(A,B)$ happens for all demand vectors
$(D_0,D_1)$, and also that $S_i$ acts as one time pad for $D_i$ for each $i=0,1$. Here,
all the transmissions consist of $4l$ bits (neglecting the 2 bits for $(D_0
\oplus S_0,D_1 \oplus S_1)$). Since $F=3l$, this scheme achieves a rate $R =
\frac{4}{3}$.
\end{example}

\ifarxiv
\begin{remark}
	It was shown in~\cite{AravindST19} using a dual construction
that if a memory-rate pair $(a,b)$
is achievable for $N=K=2$, then the pair $(b,a)$ is also achievable
if the cache contents are selected from one out of two
possible choices uniformly at random in the scheme for achieving $(a,b)$.
Furthermore, for $N=K=2$, the pair
$(4/3,1/3)$ is achievable using the MDS scheme given in~\cite{Wan19}. 
It can
also be observed that the pair $(4/3,1/3)$ is achievable by
broadcasting one bit  which implies that it preserves the information theoretic
privacy. However, it is not clear  whether we can achieve the pair $(1/3,4/3)$
directly using this MDS scheme and the dual construction in~\cite{AravindST19}
since in the MDS scheme the cache contents are not selected from one out of two
possible choices, as the dual construction requires. \end{remark}
\fi


{\it Proof of Theorem~\ref{Thm_genach}:}
Let us consider any $(N,NK,M,R)$ \DRS-non-private scheme. 
Let $C_k^{(np)}; k\in [NK]$ be the cache encoding functions, $E^{(np)}$ be the
broadcast encoding function, and $G_k^{(np)};k\in [NK]$ be the decoding
functions for the given $(N,NK,M,R)$ \DRS-non-private scheme.
We will now present a construction of an $(N,K,M,R)$-private scheme from
the given $(N,NK,M,R)$ \DRS-non-private scheme. 

\underline{Cache encoding:} For $k\in [K]$ and $S_k\in [N]$, the $k$-th user's cache
encoding function is given by
\begin{eqnarray}
&& C_k (S_k, \bar{W}) := C_{kN+S_k}^{(np)} (\bar{W}),
\label{eq:cachesch}
\end{eqnarray}
\ifarxiv
The $k^{\text{th}}$-user's cache encoding function is taken to be the same
as that of the $S_k$-th user in the $k$-th stack in the corresponding $(N,NK)$
caching problem.
\fi
The cache content is given by $Z_k=(C_k (S_k, \bar{W}), S_k)$.

\underline{Broadcast encoding:} To define the broadcast encoding, we need some
new notations and definitions.
Let $\Psi: [N]^N \rightarrow [N]^N$ denote the cyclic shift operator,
such that $\Psi (t_1,t_2,\cdots,t_N)=(t_N,t_1,\cdots,t_{N-1})$.
Let us denote a vector $\mathbb{I}:=(0,1,\cdots,N-1)$.
Let us also define 
\begin{align*}
\bar{S}\ominus \bar{D}:=(S_0\ominus D_0, S_1\ominus D_1, \cdots, S_{K-1}\ominus D_{K-1}),
\end{align*}
where $S_k\ominus D_k$ denotes the difference of $S_k$ and $D_k$ modulo $N$.
For a given $\bar{D}\in [N]^K$, we define an expanded demand vector
for the non-private problem as:
\begin{align*}
\bar{D}^{(np)}(\bar{D},\bar{S})=(\Psi^{S_0\ominus D_0}(\mathbb{I}),
\cdots,\Psi^{S_{K-1}\ominus D_{K-1}}(\mathbb{I})),
\end{align*}
 where $\Psi^i$ denotes
the $i$-times cyclic shift operator.

The broadcast encoding function for the $(N,K,M,R)$-private scheme is defined by
\begin{eqnarray}
&& E (\bar{W},\bar{D},\bar{S}) := E^{(np)}(\bar{W},\bar{D}^{(np)}(\bar{D},\bar{S})).
\label{eq:trencsch}
\end{eqnarray}

Let us denote $X_1=E (\bar{W},\bar{D},\bar{S})$.
The private scheme transmits the pair $X=(X_1, \bar{S}\ominus \bar{D})$.

\underline{Decoding:}
User $k$ uses the decoding function of the $(kN+S_k)$-th user in the
non-private scheme:
\ifncc
\begin{eqnarray*}
&&\hspace*{-7mm}G_k(D_k, S_k, \bar{S}\ominus \bar{D}, X_1,Z_k)
\hspace*{-1mm}=G_{kN+S_k}^{(np)}(\bar{D}^{(np)}(\bar{D},\bar{S}), X_1, Z_k)
\end{eqnarray*}
\fi
\ifarxiv
\begin{eqnarray}
&&\hspace*{-7mm}G_k(D_k, S_k, \bar{S}\ominus \bar{D}, X_1,Z_k)\nonumber\\ &&\hspace*{10mm}
=G_{kN+S_k}^{(np)}(\bar{D}^{(np)}(\bar{D},\bar{S}), X_1, Z_k)
\label{eq:decsch}
\end{eqnarray}
\fi
Here the decoder computes $\bar{D}^{(np)}(\bar{D},\bar{S})$ from
$\bar{S}\ominus \bar{D}$.

\underline{Proof of decodability:}
\ifarxiv
From \eqref{eq:decsch},\eqref{eq:cachesch}, and \eqref{eq:trencsch}, it is
clear that the decoder of the $k$-th user outputs the same file requested by the $S_k$-th virtual user of the $k$-th stack in the non-private scheme.
\fi
The index of the output file is the $(kN+S_k)$-th component in
$\bar{D}^{(np)}(\bar{D},\bar{S})$, i.e., $S_k\ominus (S_k\ominus D_k)=D_k$. 
Thus the $k$-th user recovers its desired file.

\underline{Proof of privacy:}

 The proof of privacy essentially follows from the fact that $S_i$ acts as
one time pad for $D_i$  which prevents any user $j \neq i$ getting any
information about $D_i$.
\ifncc
For a precise proof of $I(\ND{k}; Z_k,D_k,\TRS | \bar{W}) =0$,
see the extended version~\cite{KamathRD19}.
\fi
\ifarxiv
We now show that the derived $(N,K,M,R)$-private scheme satisfies the privacy condition~\eqref{Eq_instant_priv}. 
First we show that $I(\ND{k}; Z_k,D_k,\TRS | \bar{W}) =0$.
\begin{align}
& I(\ND{k}; Z_k,D_k,\TRS | \bar{W}) \nonumber \\
& = H(Z_k,D_k,\TRS | \bar{W}) - H(Z_k,D_k,\TRS | \bar{W}, \ND{k})  \nonumber \\
& =  H(S_k,D_k, \bar{S}\ominus \bar{D}, \bar{D}^{(np)}(\bar{D},\bar{S})  | \bar{W}) \nonumber \\
&\qquad - H(S_k,D_k, \bar{S}\ominus \bar{D}, \bar{D}^{(np)}(\bar{D},\bar{S}) | \bar{W},  \ND{k}) \label{Eq_privcy1}  \\
& =  H(S_k,D_k, \bar{S}\ominus \bar{D}  | \bar{W}) \nonumber \\
&\qquad - H(S_k,D_k, \bar{S}\ominus \bar{D} | \bar{W},  \ND{k}) \label{Eq_privcy1a}  \\
& =  H(S_k,D_k,\bar{S}\ominus \bar{D}) - H(S_k,D_k,\bar{S}\ominus \bar{D}  |  \ND{k}) \label{Eq_privcy2} \\
& =  H(S_k,D_k,\bar{S}\ominus \bar{D}) - H(S_k,D_k, \bar{S}\ominus \bar{D} ) \label{Eq_privcy3} \\
& =0. \label{Eq_privcy4}
\end{align}
Here~\eqref{Eq_privcy1} follows since $X=(X_1, \bar{S} \ominus \bar{D}), Z_k = (C_k(S_k,\bar{W}),S_k)$, and also 
due to~\eqref{eq:trencsch}. In~\eqref{Eq_privcy1a}, we used that $H(\bar{D}^{(np)}(\bar{D},\bar{S})| \bar{S}\ominus \bar{D}) =0$, and~\eqref{Eq_privcy2} follows since $(S_k,D_k, \bar{S}\ominus \bar{D}, \ND{k})$ is independent of $\bar{W}$.
We get~\eqref{Eq_privcy3} since $S_i \ominus D_i$ is independent of $D_i$ for all $i \in [K]$.
 Using the fact that demands and files are independent, 
we get 
the following from~\eqref{Eq_privcy4} 
\begin{align*}
I(\ND{k}; Z_k,D_k,\TRS , \bar{W}) & = I(\ND{k};  \bar{W}) + I(\ND{k}; Z_k,D_k,\TRS | \bar{W}) \\
& = 0.
\end{align*}
This shows the derived scheme satisfies the privacy condition 
$I(\ND{k}; Z_k,D_k,\TRS) =0$.  
\fi

\ifarxiv
The size of the cache in the $(N,K,M,R)$-private scheme differs only by the size of the shared key from the $(N,NK,M,R)$ \DRS-non-private scheme. For large enough file size $2^F$, this difference is negligible. We can also observe that the rate of transmission in $(N,K,M,R)$-private scheme is the same as  that of the $(N,NK,M,R)$ \DRS-non-private scheme.
This proves Theorem~\ref{Thm_genach}.
\fi
\ifncc
It is easy to check that the memory-rate pair is close to 
$(M,R)$ for the above private scheme for large $F$.
\fi
\hfill{\rule{2.1mm}{2.1mm}}

\ifarxiv
We have the following corollary.
\fi

\begin{corollary}
		\label{Cor_Type}
 If there exists an $(N,NK,M,R)$ $\cD_{(K,K,\ldots,K)}$-non-private scheme, then
there exists an $(N,K,M,R)$-private scheme.
\end{corollary}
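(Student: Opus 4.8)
The plan is to deduce Corollary~\ref{Cor_Type} directly from Theorem~\ref{Thm_genach} together with the set-inclusion~\eqref{eq:drstype}. The key observation is the elementary monotonicity principle already noted in the text: if $\cD_1 \subseteq \cD_2$, then any $\cD_2$-non-private scheme is automatically a $\cD_1$-non-private scheme, since the decodability condition \eqref{Eq_dec_cond} is required to hold only for demand vectors in the relevant subset, and a scheme that satisfies it for all of $\cD_2$ in particular satisfies it for all of $\cD_1$. Since the cache encoding, broadcast encoding, and decoding functions are unchanged, the memory-rate pair $(M,R)$ is preserved.

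Concretely, I would argue as follows. Suppose an $(N,NK,M,R)$ $\cD_{(K,K,\ldots,K)}$-non-private scheme exists. By \eqref{eq:drstype}, we have $\DRS \subseteq \cD_{(K,K,\ldots,K)}$, so this same scheme — with the same $\cC$, $E$, and $\cG$ — satisfies the decodability condition \eqref{Eq_dec_cond} for every $\bar d \in \DRS$, and hence it is an $(N,NK,M,R)$ $\DRS$-non-private scheme. Now invoke Theorem~\ref{Thm_genach} with this scheme as the blackbox: it yields an $(N,K,M,R)$-private scheme. This establishes the corollary.

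There is essentially no obstacle here — this is a one-line corollary obtained by chaining a trivial containment with the already-proved Theorem~\ref{Thm_genach}. The only thing to be careful about is to make the logic explicit: the containment $\DRS \subseteq \cD_{(K,K,\ldots,K)}$ must be stated (it is \eqref{eq:drstype}), and one must note that restricting a scheme to a smaller demand subset costs nothing in either memory or rate. If anything deserves a sentence of emphasis, it is why type class $(K,K,\ldots,K)$ is the natural object to quote from \cite{Tian2018}: the restricted demand subset used in the construction of Theorem~\ref{Thm_genach} consists entirely of demand vectors in which, across the $NK$ virtual users, each of the $N$ files is requested exactly $K$ times (once per cyclic shift within each of the $K$ stacks), so it sits inside this type class, and therefore Tian's achievability results for type $(K,K,\ldots,K)$ transfer immediately.
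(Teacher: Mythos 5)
Your proposal is correct and follows exactly the paper's own argument: invoke the containment $\DRS \subseteq \cD_{(K,K,\ldots,K)}$ from \eqref{eq:drstype} to conclude that the given scheme is also a $\DRS$-non-private scheme, then apply Theorem~\ref{Thm_genach}. Nothing is missing.
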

\ifarxiv
\begin{proof}
	As mentioned before, we have $\DRS \subseteq \cD_{(K,K,\ldots,K)}$. So an $(N,NK,M,R)$ $\cD_{(K,K,\ldots,K)}$-non-private scheme is also an $(N,NK,M,R)$ $\DRS$-non-private scheme. Then, the corollary follows from Theorem~\ref{Thm_genach}.
\end{proof}
\fi

The converse proof of Theorem~\ref{Thm_N2K2_region} uses the following lemma on some conditional distributions. 
\ifncc
The proof is elementary, and it can be found in~\cite{KamathRD19}.
\fi
\ifarxiv
\begin{lemma}
	\label{Lem_eqiuv_distrbn2}
		Let $\tilde{i} = (i+1) \mod 2$ for $i=0,1$. Then  any demand private scheme for $N=K=2$ satisfies the following  for $j =0,1$:
	\begin{align}
	( X, Z_1, W_{j}|D_1=j) & \sim ( X, Z_1, W_{j}|D_0=i, D_1=j) \notag\\
	&  \sim  (X, Z_1, W_{j}|D_0 = \tilde{i}, D_1=j) \label{Eq_priv_cond3}
	\end{align}
	and
	\begin{align} 
	 (X, Z_0,  W_{j}|D_0=j)  & \sim (X, Z_0,  W_{j}|D_0=j, D_1=i)  \notag\\
	& \sim (X,  Z_0, W_{j} | D_0 = j, D_1=\tilde{i}). \label{Eq_priv_cond4}
	\end{align}
\end{lemma}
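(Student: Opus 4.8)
The plan is to derive both displayed equivalences directly from the privacy condition \eqref{Eq_instant_priv}, which for $N=K=2$ gives $I(D_{\tilde k}; Z_k, X, D_k) = 0$ for each $k$, i.e. $D_{\tilde k}$ is independent of the triple $(Z_k, X, D_k)$. First I would fix $j\in\{0,1\}$ and work on \eqref{Eq_priv_cond3}, where the conditioning user is user $1$ (so $k=1$, $\tilde k = 0$). The privacy condition for user $1$ says $D_0$ is independent of $(Z_1, X, D_1)$. Hence, conditioning further on the event $D_0 = i$ (or $D_0 = \tilde i$) does not change the conditional law of $(Z_1, X)$ given $D_1 = j$; that is exactly the chain of equivalences in \eqref{Eq_priv_cond3} except that it also involves $W_j$.

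So the key step is to bring $W_j$ into the picture. Here I would use the fact that $W_j = W_{D_1}$ is, by the decodability condition \eqref{Eq_decod_cond}, a deterministic function of $(D_1, S_1, X, Z_1)$ — and $S_1$ is part of $Z_1$ — hence a deterministic function of $(D_1, X, Z_1)$. Conditioning on $D_1 = j$, the random variable $W_j$ is therefore a fixed (measurable) function $g$ of $(X, Z_1)$. Applying the same function $g(\cdot,\cdot)$ to both coordinates of an equality in distribution of $(X, Z_1)$ preserves the equality in distribution and appends the matching $W_j$ coordinate; this upgrades the $(X,Z_1)$-equivalence to the $(X, Z_1, W_j)$-equivalence claimed. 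The argument for \eqref{Eq_priv_cond4} is completely symmetric: take $k=0$, $\tilde k = 1$, use privacy of user $0$ to get independence of $D_1$ from $(Z_0, X, D_0)$, condition on $D_0 = j$, and note that $W_j = W_{D_0}$ is a deterministic function of $(D_0, S_0, X, Z_0) = $ function of $(X, Z_0)$ on the event $D_0 = j$, then push forward the distributional equality.

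I do not anticipate a serious obstacle here — the lemma is essentially a restatement of the privacy constraint together with decodability. The one point that needs a little care is making sure the conditioning events have positive probability so that the conditional distributions are well defined: since each $D_k$ is uniform on $\{0,1\}$ and the $D_k$ are independent, every event of the form $\{D_0 = \cdot, D_1 = \cdot\}$ has probability $1/4 > 0$, so this is fine. I would also remark that the independence of $D_0$ and $D_1$ is what lets me write, e.g., $(X, Z_1 \mid D_1 = j) \sim (X, Z_1 \mid D_1 = j, D_0 = i)$ starting from $(X,Z_1\mid D_1=j)\sim(X,Z_1,D_0\mid D_1=j)$ restricted to $D_0=i$ and renormalizing; the privacy condition gives the independence of $D_0$ from $(X,Z_1)$ even after conditioning on $D_1$, because $\eqref{Eq_instant_priv}$ asserts independence of $D_0$ from the full triple $(Z_1, X, D_1)$. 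Writing this out cleanly is routine, so I would keep the proof short, emphasizing the two inputs (privacy \eqref{Eq_instant_priv} and decodability \eqref{Eq_decod_cond}) and the "apply a deterministic function to both sides" trick.
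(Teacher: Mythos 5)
Your proposal is correct and follows essentially the same route as the paper's proof: both rest on the privacy condition $I(D_0;Z_1,X,D_1)=0$ together with decodability making $W_{D_1}$ a deterministic function of $(D_1,X,Z_1)$, the only difference being that the paper first augments the mutual information to $I(D_0;Z_1,D_1,X,W_{D_1})=0$ and then unpacks it into conditional-probability identities, whereas you first establish the distributional equality for $(X,Z_1)$ and then append $W_j$ by pushing forward the same deterministic map.
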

\fi
\ifncc
\begin{lemma}
	\label{Lem_eqiuv_distrbn2}
		Let $\tilde{k} = (k+1) \mod 2$ for $k=0,1$. Then any demand-private scheme for $N=K=2$ satisfies the following for user $k$, where $k=0,1$,  and for $j =0,1$:
	\begin{align}
	( X, Z_k, W_{j}|D_k=j) & \sim ( X, Z_k, W_{j}|D_{\tilde{k}}=0, D_k=j) \notag\\
	&  \sim  (X, Z_k, W_{j}|D_{\tilde{k}} =1, D_k=j). \label{Eq_priv_cond3}
	\end{align}
\end{lemma}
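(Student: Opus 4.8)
The plan is to prove the lemma directly from the privacy condition~\eqref{Eq_instant_priv}, which for $N=K=2$ says that $I(D_{\tilde k}; Z_k, X, D_k) = 0$, i.e., $D_{\tilde k}$ is independent of the triple $(Z_k, X, D_k)$. The key observation is that the file $W_j$ is recoverable from $(Z_k, X, D_k)$ when $D_k = j$ via the decoding function, so conditioning on $D_k = j$ we may freely append $W_j$ to the list of variables without changing any independence relation involving $D_{\tilde k}$. First I would note that since $D_{\tilde k} \perp (Z_k, X, D_k)$, in particular $D_{\tilde k} \perp (Z_k, X)$ conditioned on $\{D_k = j\}$; I would make this conditional-independence statement precise by writing the joint distribution of $(D_{\tilde k}, Z_k, X)$ given $D_k = j$ as a product of its marginals.

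Next I would handle the extra variable $W_j$. Under the event $\{D_k = j\}$, the decodability condition~\eqref{Eq_decod_cond} (specialized to $N=K=2$ and absorbing the negligible auxiliary transmission into $X$) gives $W_j = W_{D_k} = G_k(\cdots)$ as a deterministic function of $(D_k, S_k, X, Z_k)$, hence of $(Z_k, X)$ once $D_k = j$ is fixed (recall $S_k$ is part of $Z_k$). Since appending a deterministic function of $(Z_k,X)$ preserves the conditional independence from $D_{\tilde k}$, we get that $D_{\tilde k}$ is independent of $(Z_k, X, W_j)$ given $\{D_k = j\}$. Spelling this out as an equality of conditional distributions, for each value $d \in \{0,1\}$ of $D_{\tilde k}$,
\begin{align*}
(X, Z_k, W_j \mid D_k = j, D_{\tilde k} = d) \sim (X, Z_k, W_j \mid D_k = j),
\end{align*}
which, applied to $d = 0$ and $d = 1$, is exactly the chain of equivalences in~\eqref{Eq_priv_cond3}.

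For cleanliness I would organize this as: (i) restate~\eqref{Eq_instant_priv} for $N=K=2$ as the independence $D_{\tilde k}\perp (Z_k,X,D_k)$; (ii) condition on $\{D_k=j\}$ to obtain $D_{\tilde k}\perp(Z_k,X)\mid D_k=j$; (iii) invoke decodability to write $W_j$ as a function of $(Z_k,X)$ under this event; (iv) conclude $D_{\tilde k}\perp(Z_k,X,W_j)\mid D_k=j$ and translate into the displayed distributional equivalences. The only mild subtlety — hardly an obstacle — is bookkeeping around the auxiliary transmission $J$ and the shared key $S_k$: one must note that $S_k$ is included in $Z_k$ by definition and that $J$ contributes negligibly and is a function of $(\bar D, P, \bar S)$, so the argument is unaffected; I would mention this in one sentence rather than belabor it. The proof is otherwise a short, routine manipulation of the defining independence relations, so I do not anticipate any real difficulty.
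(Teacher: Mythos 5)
Your proposal is correct and follows essentially the same route as the paper: start from the privacy condition $I(D_{\tilde k};Z_k,X,D_k)=0$, use decodability to append $W_j=W_{D_k}$ (a function of $(Z_k,X,D_k)$, since $S_k\subseteq Z_k$ and $J\subseteq X$) without breaking the independence, and then read off the equality of the conditional distributions given $D_k=j$ with and without further conditioning on $D_{\tilde k}$. The paper spells out the last translation by explicitly manipulating probabilities (using that the demands are independent and uniform), whereas you phrase it as a conditional-independence statement, but the substance is identical.
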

\fi
\ifarxiv
\begin{proof}
		We prove~\eqref{Eq_priv_cond3} for $D_1 =0$. Other cases follow similarly.
		Any $(N,K,M,R)$-private scheme satisfies that $I(D_0;Z_1,D_1,X) =0$. 
	Since $H(W_{D_1}|X,Z_1,D_1) =0$,  we have that $I(D_0;Z_1,D_1,X, W_{D_1}) =0$. Then it follows that
	\begin{align*}
	& Pr(D_0=0  | X=x, Z_1 = z', W_0=w_0,D_1=0)\\
	&\quad =Pr(D_0=1| X=x, Z_1 = z', W_0=w_0,D_1=0).
	\end{align*}
	Multiplying both sides by \\$Pr(X=x, Z_1 = z', W_0=w_0 | D_1=0)$ gives
	\begin{align*}
	&Pr(D_0=0, X=x, Z_1 = z',W_0=w_0 | D_1=0) \\
	&=Pr(D_0=1, X=x, Z_1 = z', W_0=w_0| D_1=0).
	\end{align*}
	Then it follows that 
	\begin{align*}
	& Pr(D_0=0| D_1=0) \times \\
	&\quad \quad   Pr(X=x, Z_1 = z', W_0=w_0| D_0=0,D_1=0) \\
	&= Pr(D_0=1| D_1=0)\times  \\
	&\quad \quad  Pr(X=x, Z_1 = z',W_0=w_0,|  D_0=1, D_1=0).
	\end{align*}
	Since the demands are equally likely and they are independent of each other, we get
		\begin{align}
 & Pr(X=x, Z_1 = z', W_0=w_0| D_0=0,D_1=0)\notag \\
	&=Pr(X=x, Z_1 = z',W_0=w_0,|  D_0=1, D_1=0). \label{eq:lemp1}
	\end{align}
	Further, we also have
	\begin{align}
	& Pr(X=x, Z_1 = z', W_0=w_0| D_1=0)\notag \\
	&= \sum_{i=0}^{1}Pr(D_0=i) \times \notag\\
	&\hspace{4mm} Pr(X=x, Z_1 = z',W_0=w_0,|  D_0=i, D_1=0). \label{eq:lemp2}
	\end{align}
		Eq.~\eqref{eq:lemp1} and \eqref{eq:lemp2} together prove~\eqref{Eq_priv_cond3} for $D_1 =0$.
\end{proof}
\fi
\begin{figure}
	\centering
\ifarxiv
	\includegraphics[scale=0.48]{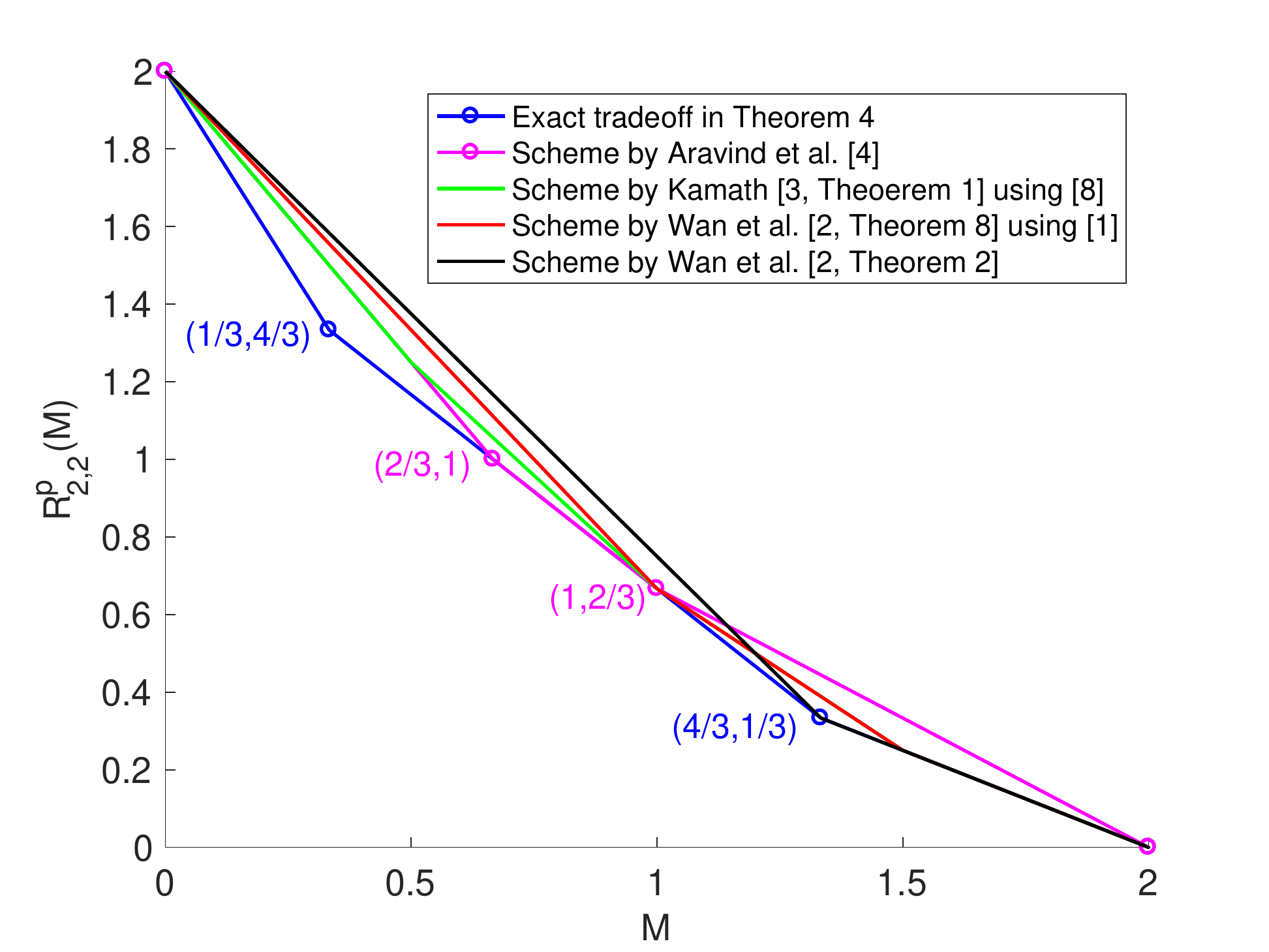}
\fi
\ifncc
	\includegraphics[scale=0.36]{N2K2.pdf}
\fi
	\caption{Comparison of known private schemes for $N=K=2$.}
	\label{Fig_region_N2K2}
\end{figure}

We present the optimal memory-rate region with demand privacy for $N=K=2$ in
Theorem~\ref{Thm_N2K2_region}. In Fig.~\ref{Fig_region_N2K2}, we plot the optimal trade-off for $N=K=2 $ along with the known achievable memory-rate pairs using different schemes in the literature.
\ifarxiv
Other points known using other schemes are also shown.  
\fi

 \begin{theorem}
 	\label{Thm_N2K2_region}
 	Any memory-rate pair $(M,R)$ is achievable with demand privacy for $N=K=2$ if and only if 
 	\begin{align}
 	2M + R \geq 2, \quad 3M+3R \geq 5, \quad M+2R \geq 2. \label{Eq_N2K2_region}
 	\end{align}
 \end{theorem}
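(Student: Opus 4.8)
The plan is to establish the two directions separately. For achievability, by Theorem~\ref{Thm_genach} it suffices to exhibit $(2,4,M,R)$ $\DRS$-non-private schemes at the two corner points of the region~\eqref{Eq_N2K2_region}, namely $(M,R)=(1/3,4/3)$ and $(M,R)=(4/3,1/3)$; the rest of the region then follows by memory-sharing (time-sharing between schemes) together with the trivial points $(0,2)$ and $(2,0)$. The point $(1/3,4/3)$ is already worked out in Example~\ref{Ex_simple} using Tables~\ref{Table_cache_NK2} and~\ref{Tab_Tx}, so what remains on the achievability side is to spell out a $\DRS$-non-private scheme for $(2,4,4/3,1/3)$ from~\cite{Tian2018}, check that it decodes all four demand vectors in $\DRS$ of Table~\ref{Tab_2x2}, and invoke Theorem~\ref{Thm_genach}. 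Both schemes use subpacketization $3$ and linear coded placement, as announced in the introduction. One then checks that the three halfplanes in~\eqref{Eq_N2K2_region} are exactly the lower convex hull of $\{(0,2),(1/3,4/3),(4/3,1/3),(2,0)\}$: the line through $(0,2)$ and $(1/3,4/3)$ is $2M+R=2$, the line through $(1/3,4/3)$ and $(4/3,1/3)$ is $3M+3R=5$, and the line through $(4/3,1/3)$ and $(2,0)$ is $M+2R=2$.

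The harder direction is the converse: every demand-private $(2,2,M,R)$ scheme must satisfy all three inequalities. The bounds $2M+R\ge 2$ and $M+2R\ge 2$ I expect to follow already from the non-private cut-set argument of~\cite{Maddah14} applied to a single user (consider user $k$'s cache $Z_k$ together with a transmission $X$; since one user can recover any file by varying its own demand, $2F \le MF + RF$ gives $2M+R\ge 2$, and by a two-file recovery across two transmissions one gets $M+2R\ge 2$); these do not even need privacy. The crux is therefore the bound $3M+3R\ge 5$, and this is where privacy must be used: without privacy the optimal trade-off for $N=K=2$ passes through $(1/3,1)$ and $(1,1/3)$, which violates $3M+3R\ge 5$, so any converse proof of this inequality \emph{must} exploit~\eqref{Eq_instant_priv}. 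The plan is to use Lemma~\ref{Lem_eqiuv_distrbn2}: the privacy constraint forces the conditional law of $(X,Z_k,W_j)$ given $D_k=j$ to be independent of the other user's demand. This lets one ``mix'' entropy inequalities obtained under different demand vectors $\bar D$ as if they all referred to the same joint distribution, which is the key leverage that a pure non-private converse lacks.

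Concretely, I would bound $3(M+R)F = 3H(Z_0)+3H(X)$ (up to the negligible $J$ term and $o(F)$ slack) from below by $5F = 5H(W_0)=5H(W_1)$ via a chain of steps of the following flavor. Fix demand vectors so that, say, user $0$ demands $W_0$ and user $1$ demands $W_1$ under one scenario and the reversed assignment under another. Using decodability, $H(W_{D_0}\mid Z_0,X,D_0)=0$ and $H(W_{D_1}\mid Z_1,X,D_1)=0$. Expand $H(Z_0,X)$ and $H(Z_1,X)$ and use submodularity of entropy together with the independence of $W_0,W_1$ to introduce the ``second'' file as conditioning; the privacy-induced distributional equalities of Lemma~\ref{Lem_eqiuv_distrbn2} are then invoked to identify terms such as $H(X\mid Z_0,W_0,D_0=0)$ evaluated under demand $\bar D=(0,1)$ with the same quantity under $\bar D=(0,0)$, so that inequalities derived in distinct demand scenarios can be added coherently. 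Summing three such inequalities (two of ``Maddah-Ali--Niesen cut'' type and one genuinely using the privacy equivalence to get a cross-term for free) should yield $3H(Z_0)+3H(X)\ge 5F - o(F)$, hence $3M+3R\ge 5$ after dividing by $F$ and letting $F\to\infty$.

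\textbf{Main obstacle.} The routine part is the achievability and the two cut-set bounds; the real work is bookkeeping the converse chain for $3M+3R\ge 5$ so that every entropic term is legitimately rewritten using Lemma~\ref{Lem_eqiuv_distrbn2} under a consistent demand conditioning — in particular making sure that each application of the lemma matches a term that actually appears, and that the final inequality is tight at both corner points $(1/3,4/3)$ and $(4/3,1/3)$ (a good sanity check on the derivation). I expect the combinatorial choice of \emph{which} conditional entropies to expand and \emph{which} demand scenarios to pair up to be the delicate step, since an arbitrary pairing will only reproduce the weaker non-private bound.
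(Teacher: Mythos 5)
Your overall architecture coincides with the paper's: achievability via Theorem~\ref{Thm_genach} (equivalently Corollary~\ref{Cor_Type}) applied to the type-$(2,2)$ schemes of \cite{Tian2018} at the corner points $(1/3,4/3)$ and $(4/3,1/3)$, plus memory sharing with $(0,2)$ and $(2,0)$; and a converse in which only $3M+3R\ge 5$ needs privacy, proved through Lemma~\ref{Lem_eqiuv_distrbn2}. Two small slips on the easy side: the bound $2M+R\ge 2$ is not a single-user cut (one cache plus one transmission only gives $M+R\ge 1$); it needs both caches and one transmission ($Z_0,Z_1,X$ under demand $(0,1)$ recover both files), while one cache and two transmissions gives $M+2R\ge 2$. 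Also, the non-private optimum for $N=K=2$ passes through $(1/2,1)$ and $(1,1/2)$, not $(1/3,1)$ and $(1,1/3)$; your conclusion that it violates $3M+3R\ge 5$, hence that privacy must enter the proof of the middle inequality, is nevertheless correct.

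The genuine gap is that the converse for $3M+3R\ge 5$ is only a plan, and this chain is precisely the non-routine content of the theorem. You name the right tool and the right mechanism (use the privacy-induced identities to equate conditional laws across demand vectors so that inequalities derived under different demands can be added), but you never commit to a concrete entropy combination, and your trial ansatz $3(M+R)F=3H(Z_0)+3H(X)$ is already off target because the working combination is \emph{asymmetric} in the two caches. The paper bounds $H(Z_0,X|D_0{=}0)+H(Z_1,X|D_1{=}0)+H(Z_1,X|D_0{=}1,D_1{=}0)$ --- one $Z_0$ term and two $Z_1$ terms under three different demand conditionings --- above by $3MF+3RF$ (caches and transmission sizes, independence of caches from demands) and below by $5F$. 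The lower bound inserts $W_0$ into the first two terms by decodability, uses Lemma~\ref{Lem_eqiuv_distrbn2} to bring both to the common conditioning $(D_0{=}0,D_1{=}0)$ and merge them into $H(Z_0,Z_1,W_0,X|D_0{=}0,D_1{=}0)+H(W_0,X|D_0{=}0,D_1{=}0)$, pairs the first of these with the third term under $(D_0{=}1,D_1{=}0)$ to extract $H(W_0,W_1)+H(Z_1,W_0|D_0{=}1,D_1{=}0)$, and then applies Lemma~\ref{Lem_eqiuv_distrbn2} once more to flip the conditioning from $(1,0)$ to $(0,1)$ so that the leftover terms combine, via decodability at user $1$, into the remaining $2F$. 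Without exhibiting such a chain (and checking it is tight at $(1/3,4/3)$ and $(4/3,1/3)$, as you rightly propose as a sanity check), the proposal establishes the achievability and the two classical bounds but not the theorem's characteristic inequality.
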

\begin{proof}
\ifarxiv
	We first show the achievability of the region given in~\eqref{Eq_N2K2_region}.
\fi
	It was shown in~\cite[Proposition~7]{Tian2018} that the region given
by~\eqref{Eq_N2K2_region} is an achievable rate region for Type (2,2) in
$N=2,K=4$  coded caching problem. Then the achievability under demand-privacy
for $N=K=2$ follows from
Corollary~\ref{Cor_Type}. 
\ifarxiv
We note that to show the achievability of the region
given by~\eqref{Eq_N2K2_region}  for Type (2,2) in $N=2,K=4$  coded caching
problem, in particular it was shown that  the memory-rate pairs
$(\frac{1}{3},\frac{4}{3})$ and $(\frac{4}{3}, \frac{1}{3})$ are
achievable~\cite{Tian2018}. The scheme that achieves the memory-rate pair
$(\frac{1}{3},\frac{4}{3})$ was used in Example~\ref{Ex_simple} for achieving
the same point for the demand-private scheme for $N=K=2$. The point $(\frac{4}{3}, \frac{1}{3})$ is also achievable by MDS scheme in~\cite{Wan19}.
\fi

	To show the converse, we only need to prove that  any $(M,R)$ pair
satisfies $3M+3R \geq 5$. The other two inequalities in~\eqref{Eq_N2K2_region} 
are also necessary under no privacy requirement. So, those  hold  under privacy requirement as well. 
 We note that the bound  $3M+3R \geq 5$ is given for 2 files and 3 users  under no privacy requirement in~\cite[Proposition~5]{Tian2018}. We obtain this bound for 2 users 2 files with privacy constraint, crucially using Lemma~\ref{Lem_eqiuv_distrbn2}. To this end, we first lower bound  $H(Z_0,X|D_0=0)+ H(Z_1,X|D_1=0) + H(Z_1,X|D_0=1,D_1=0)$ by $5F$ and then upper bound the same quantity by  $3MF+3RF$ which proves the bound $3M+3R \geq 5$.

\ifncc

From the fact that  the cache contents  are independent of the demands and also that the transmission is independent of demands due to the privacy condition, it is easy to verify that $H(Z_0,X|D_0=0)+ H(Z_1,X|D_1=0) + H(Z_1,X|D_0=1,D_1=0)$ is upper bounded by $3MF+3RF$. Next we lower bound the same quantity by  $5F$ which proves the bound $3M+3R \geq 5$.
\fi
\ifarxiv
We first  lower bound $H(Z_0,X|D_0=0)+ H(Z_1,X|D_1=0) $ as follows:
\begin{align}
	&H(Z_0,X|D_0=0)+ H(Z_1,X|D_1=0)  \nonumber \\
	&=H(Z_0,W_0,X|D_0=0)+ H(Z_1,W_0,X|D_1=0) \label{Eq_N2K2_conv1} \\
	&=H(Z_0,W_0,X|D_0=0,D_1=0)  \nonumber \\ &\quad
          + H(Z_1,W_0,X|D_0=0,D_1=0) \label{Eq_N2K2_conv1a} \\
	&\geq H(Z_0,Z_1|W_0,X,D_0=0,D_1=0)  \nonumber \\ &\quad \quad 
          + 2H(W_0,X|D_0=0,D_1=0) \nonumber \\
	&=H(Z_0,Z_1,W_0,X|D_0=0,D_1=0) \nonumber \\
	&\quad \quad + H(W_0,X|D_0=0,D_1=0), \label{d_eq1}
\end{align}
where~\eqref{Eq_N2K2_conv1} follows from the decodability condition, and~\eqref{Eq_N2K2_conv1a} follows from Lemma~\ref{Lem_eqiuv_distrbn2}.
\fi
\ifncc
First,
\begin{align}
&H(Z_0,X|D_0=0)+ H(Z_1,X|D_1=0)  \nonumber \\
&=H(Z_0,W_0,X|D_0=0)+ H(Z_1,W_0,X|D_1=0) \label{Eq_N2K2_conv1} \\
&=H(Z_0,W_0,X|D_0=0,D_1=0)  \nonumber \\ &\quad
+ H(Z_1,W_0,X|D_0=0,D_1=0) \label{Eq_N2K2_conv1a} \\
&\geq H(Z_0,Z_1,W_0,X|D_0=0,D_1=0) \nonumber \\
&\quad \quad + H(W_0,X|D_0=0,D_1=0), \label{d_eq1}
\end{align}
where~\eqref{Eq_N2K2_conv1} follows from the decodability condition, and~\eqref{Eq_N2K2_conv1a} follows from Lemma~\ref{Lem_eqiuv_distrbn2}.

Using the decodability condition, it can be shown (see~\cite{KamathRD19}) that 
\begin{multline}
H(Z_0,Z_1,W_0,X|D_0=0,D_1=0) \\
+ H(X,Z_1|D_0=1,D_1=0) \\
\geq H(W_1,W_0) + H(Z_1,W_0|D_0=1,D_1=0). \label{d_eq2}
\end{multline}
\fi
\ifarxiv
Now we find a lower bound on $H(Z_0,Z_1,W_0,X|D_0=0,D_1=0) + H(Z_1,X|D_0=1,D_1=0)$.
\begin{align}
	&H(Z_0,Z_1,W_0,X|D_0=0,D_1=0) \nonumber\\
	&\quad \quad + H(X,Z_1|D_0=1,D_1=0) \nonumber\\
	&= H(Z_0,Z_1,W_0,X|D_0=0,D_1=0) \nonumber \\
	&\quad \quad + H(X,Z_1,W_0|D_0=1,D_1=0) \label{Eq_N2K2_conv2}\\
	&\geq H(Z_0,Z_1,W_0|D_0=0,D_1=0) \nonumber\\
	&\quad \quad + H(X,Z_1,W_0|D_0=1,D_1=0) \nonumber\\
	&= H(Z_0,Z_1,W_0|D_0=1,D_1=0) \nonumber\\
	&\quad \quad + H(X,Z_1,W_0|D_0=1,D_1=0) \label{Eq_N2K2_conv3}\\
	&\geq H(Z_0,X|Z_1,W_0,D_0=1,D_1=0) \nonumber\\
	&\quad \quad + 2H(Z_1,W_0|D_0=1,D_1=0) \nonumber  \\
	&= H(Z_0,X, W_1|Z_1,W_0,D_0=1,D_1=0) \nonumber\\
	&\quad \quad + 2H(Z_1,W_0|D_0=1,D_1=0) \label{Eq_N2K2_conv4}\\
	&\geq H(W_1,Z_1,W_0|D_0=1,D_1=0) \nonumber\\
	&\quad \quad + H(Z_1,W_0|D_0=1,D_1=0) \nonumber\\
	&\geq H(W_1,W_0) + H(Z_1,W_0|D_0=1,D_1=0). \label{d_eq2}
\end{align}
\fi

\ifarxiv
From~\eqref{d_eq1} and~\eqref{d_eq2}, we  obtain
\begin{align}
	&H(Z_0,X|D_0=0)+ H(Z_1,X|D_1=0) \nonumber \\
	&\quad \quad + H(Z_1,X|D_0=1,D_1=0) \nonumber \\
	&\geq H(W_1,W_0) + H(Z_1,W_0|D_0=1,D_1=0) \nonumber \\
	&\quad \quad +H(W_0,X|D_0=0,D_1=0) \nonumber \\
	& = H(W_1,W_0) + H(Z_1,W_0|D_0=0,D_1=1) \nonumber \\
	&\quad \quad + H(W_0,X|D_0=0,D_1=1) \label{Eq_N2K2_conv5} \\
	& \geq  H(W_1,W_0) + H(Z_1,X,W_1|W_0,D_0=0,D_1=1) \nonumber \\
	&\quad \quad + 2H(W_0|D_0=0,D_1=1) \label{Eq_N2K2_conv6} \\
	& \geq H(W_1,W_0) + H(W_1|W_0,D_0=0,D_1=1) \nonumber \\
	&\quad \quad + 2H(W_0|D_0=0,D_1=1) \nonumber\\
	& = 5F,\label{d_eq3}
\end{align}
where in~\eqref{Eq_N2K2_conv5}, the equality in the third term follows from Lemma~\ref{Lem_eqiuv_distrbn2} and the equality in the second term follows since demands are independent of caches and files. Further,
\eqref{Eq_N2K2_conv6} follows because of the  decodability condition and also due to the fact that conditioning reduces entropy. 
\fi
\ifncc
From~\eqref{d_eq1} and~\eqref{d_eq2}, we  obtain
\begin{align}
&H(Z_0,X|D_0=0)+ H(Z_1,X|D_1=0) \nonumber \\
&\quad \quad + H(Z_1,X|D_0=1,D_1=0) \nonumber \\
&\geq H(W_1,W_0) + H(Z_1,W_0|D_0=1,D_1=0) \nonumber \\
&\quad \quad +H(W_0,X|D_0=0,D_1=0) \nonumber \\
& = H(W_1,W_0) + H(Z_1,W_0|D_0=0,D_1=1) \nonumber \\
&\quad \quad + H(W_0,X|D_0=0,D_1=1) \label{Eq_N2K2_conv5} \\
& \geq  H(W_1,W_0) + H(Z_1,X,W_1|W_0,D_0=0,D_1=1) \nonumber \\
&\quad \quad + 2H(W_0|D_0=0,D_1=1) \label{Eq_N2K2_conv6} \\
& \geq 5F,\label{d_eq3}
\end{align}
where in~\eqref{Eq_N2K2_conv5} we used Lemma~\ref{Lem_eqiuv_distrbn2}, and in \eqref{Eq_N2K2_conv6} we used the  decodability condition. This completes the proof of Theorem~\ref{Thm_N2K2_region}.
\fi
\ifarxiv
We now show an upper bound of the same quantity:
\begin{align}
	&H(Z_0,X|D_0=0)+ H(Z_1,X|D_1=0) \nonumber \\
	&\quad \quad + H(Z_1,X|D_0=1,D_1=0) \nonumber \\
	&\leq H(Z_0|D_0=0) + H(X|D_0=0) + H(X|D_1=0) \nonumber \\
	&\quad \quad + H(Z_1|D_1=0) + H(Z_1|D_0=1, D_1=0) \nonumber\\
	&\qquad \quad  + H(X|D_0=1,D_1=0)\nonumber\\
	&\leq 3MF+3RF, \label{d_eq4}
\end{align}
where in the last inequality we used the fact that  the cache contents  are independent of the demands and also that the transmission is independent of demands due to the privacy condition. From~\eqref{d_eq3} and~\eqref{d_eq4}, we get $3M+3R\geq5$. This completes the proof of Theorem~\ref{Thm_N2K2_region}.
\fi
\end{proof}

%

\ifarxiv
 \section{Conclusion}
\label{sec_concl}
The problem of demand-private coded caching opens new avenues of research.
In this paper, we considered the demand-private
coded caching problem for noiseless broadcast network under
information-theoretic privacy. To characterize the exact trade-off for $N=K=2$,
we have provided a converse bound that accounts for the privacy constraints. To
the best of our knowledge, this converse bound is the first of its kind. It is
well known from the non-private coded caching problem that obtaining the exact
trade-off is in general difficult. So, we expect the same would be true for
demand-private coded caching. Since the exact characterization is known to be
difficult, the order optimality of the achievable schemes are
investigated in non-private schemes.  Theorem~\ref{Thm_order} completes the
order optimality result from \cite{Kamath19} for information
theoretically demand-private coded
caching. The order optimality is shown by comparing the achievable rates under
privacy with the lower bounds on the rates under no privacy. This also
highlights the fact that the optimal rates with and without demand privacy are
always within a constant factor.
\fi



 \bibliographystyle{IEEEtran}
 \bibliography{Bibliography.bib}





\end{document}